\newtheorem{theorem}{Theorem}
\newtheorem{lemma}[theorem]{Lemma}
\theoremstyle{definition}
\newtheorem*{remark}{Remark}
\newtheorem*{definition}{Definition}
\newcommand{\RPD}{\textsf{\textbf{RPD}}}
\newcommand{\RHD}{\textsf{\textbf{RHD}}}
\newcommand{\FD}{\textsf{\textbf{FD}}}
\newcommand{\ID}{\textsf{\textbf{ID}}}
\newcommand{\VV}{\mathcal{V}}
\newcommand{\HH}{\mathbb{H}}
\newcommand{\R}{\mathbb{R}}
\newcommand{\N}{\mathbb{N}}
\newcommand{\argmax}{\operatornamewithlimits{argmax}}
\newcommand{\pr}{\mathbb{P}}
\newcommand{\E}{\mathbb{E}}
\newcommand{\SD}{\mathsf{SD}}
\newcommand{\MAD}{\mathsf{MAD}}
\newcommand{\med}{\mathsf{med}}
\newcommand{\set}[1]{\left\lbrace #1 \right\rbrace}
\renewcommand{\SS}{\mathbb{S}}
\newcommand{\BB}{\mathbb{B}}
\renewcommand{\P}[1]{\mathcal{P}({#1})}
\newcommand{\dd}{\,\mathrm{d}}
\newcommand{\abs}[1]{\left|{#1}\right|}
\newcommand{\inner}[1]{\left\langle#1\right\rangle}
\newcommand{\norm}[1]{\left\lVert#1\right\rVert}
\newcommand{\Clo}[1]{\operatorname{cl}\left({#1}\right)}
\newcommand{\spn}[1]{\operatorname{span}\left(#1\right)}
\newcommand{\st}{\,\colon\,}
\newcommand{\eqdis}{\overset{d}{=}}
\newcommand{\tr}{^\top}
\newcommand{\distto}{\overset{d}{\longrightarrow}}
\newcommand{\xdistto}[1][]{\xrightarrow[\scalebox{0.7}{$#1$}]{d}}
\newcommand{\weakto}{\rightharpoonup}
\newcommand{\xweakto}[1][]{\xrightharpoonup[\scalebox{0.7}{$#1$}]{}}
\newcommand{\asto}{\xrightarrow[]{\text{a.s.}}}
\newcommand{\xasto}[1][]{\xrightarrow[\scalebox{0.7}{$#1$}]{\text{a.s.}}}
\newcommand{\xto}[1][]{\xrightarrow[\scalebox{0.7}{$#1$}]{}}
\title{Projection depth for functional data: Theoretical properties\thanks{
This work was supported by the Czech Science Foundation (project no.~24-10822S)
and the ERC CZ grant LL2407 of the Ministry of Education, Youth and Sport of the Czech Republic.
}}
\author{
Filip Bo\v{c}inec\\
Faculty of Mathematics and Physics, Charles University, Prague, Czech Republic\\
\texttt{bocinec@karlin.mff.cuni.cz}
\and
Stanislav Nagy\\
Faculty of Mathematics and Physics, Charles University, Prague, Czech Republic\\
\texttt{nagy@karlin.mff.cuni.cz}
\and
Hyemin Yeon\\
Department of Mathematical Sciences, Kent State University, Kent, USA\\
\texttt{hyeon1@kent.edu}
}
\date{}
\begin{document}
\maketitle

\begin{abstract}
We introduce a novel projection depth for data lying in a general Hilbert space,
called the regularized projection depth,
with a focus on functional data.
By regularizing projection directions,
the proposed depth does not suffer from the degeneracy issue that may arise when the classical projection depth is naively defined on an infinite-dimensional space. 
Compared to existing functional depth notions, 
the regularized projection depth has several advantages: 
(i) it requires no moment assumptions on the underlying distribution, 
(ii) it satisfies many desirable depth properties including invariance, monotonicity, and vanishing at infinity,  
(iii) its sample version uniformly converges under mild conditions,
and (iv) it generates a highly robust median.
Furthermore, the proposed depth is statistically useful as 
it (v) does not produce ties in the induced ranks 
and (vi) effectively detects shape outlying functions. 
This paper focuses mainly on the theoretical properties of the regularized projection depth.
\end{abstract}

\bigskip
\noindent\textbf{Keywords:}
functional data analysis; statistical depth; projection depth; robust statistics

\medskip
\noindent\textbf{MSC 2020:}
62G05; 62H99; 62R10

\section{Introduction}

Many tools in both theoretical and applied statistics require ordering data points according to their “typicality” or “centrality” within a dataset. 
In the univariate setting, this task is naturally solved by sorting the observed values.
The problem becomes more involved 
as data structure exhibits more complexity.
For instance, for multivariate, functional, manifold, or object data,
it is not straightforward to define canonical ordering.
To address this challenge, the concept of statistical depth was introduced. 
A statistical depth function assigns a non-negative number to each point in the underlying space where the observed random elements lie.
The corresponding depth value quantifies its centrality with respect to (w.r.t.) a given distribution or data cloud. 
Points with high depth are considered central, while those with low depth lie near the boundary and may be considered potential outliers.
Among the many applications of depth functions are robust estimation, outlier detection, hypothesis testing, and data visualization; see \citet{Liu_etal1999, Zuo_Serfling2000, Mosler_Mozharovskyi2022} and the references therein. 

Numerous multivariate depth functions have been introduced in the literature. 
The earliest and the most thoroughly studied is the halfspace (or Tukey) depth~\citep{Tukey1975}.
It has inspired a wide range of other depth notions such as the simplicial depth~\citep{Liu1990}, the spatial depth~\citep{Chaudhuri1996}, and the zonoid depth~\citep{Mosler2002}. 
A prominent notion is the \emph{projection depth}~\citep{Zuo2003}. 
It is based on the concept of projection pursuit~\citep{Huber1985} and the Stahel-Donoho outlyingness~\citep{Stahel1981, Donoho1982}. 

To describe the multivariate projection depth, 
let $\HH = \R^d$ be the $d$-dimensional Euclidean space with the usual inner product $\left\langle \cdot, \cdot \right\rangle$ and the associated Euclidean norm $\left\Vert \cdot \right\Vert$. 
For a point $x\in \HH$ and a unit vector $v \in \SS = \set{y\in\HH \st \norm{y}=1}$, the outlyingness of $x$ w.r.t. an $\HH$-valued random variable $X \sim P_X$ in direction $v$ is
\begin{align}\label{eqOut}
	O_v(x;P_X) = \frac{\abs{\inner{x, v} - \med[\inner{X,v}]}}{\MAD[\inner{X,v}]}.
\end{align}
Here, $\med[U]$ and $\MAD[U]$ stand for the median and the median absolute deviation (MAD) of the real-valued random variable $U$.\footnote{We define the median of a real-valued random variable~$U$, denoted by~$\med[U]$, as the midpoint of all values~$m \in \R$ satisfying $\min\left\{ \pr(U \leq m), \pr(U \geq m) \right\} \geq \tfrac{1}{2}$. 
The median absolute deviation $\MAD[U]$ is then defined as $\med[\abs{U - \med[U]}]$.}
The projection depth at $x \in \HH$ w.r.t. $P_X$ is then defined as
\begin{align}\label{eqFPD}
	D(x;P_X)= \inf_{v \in \SS} \left( 1 + O_v(x; P_X) \right)^{-1}=\left( 1 + \sup_{v \in \SS}O_v(x; P_X) \right)^{-1},
\end{align}
that is, we take the largest outlyingness over all directions and transform it so that points that are less outlying receive higher depth values.
The projection depth \eqref{eqFPD} on $\R^d$ satisfies a range of desirable properties related to invariance, symmetry, and consistency. \citet[Table~2]{Mosler_Mozharovskyi2022} provide an overview of the relevant properties of the projection depth in comparison with other depth concepts. \citet{Liu_Zuo2014b} examine computational aspects of this depth.

Beyond multivariate data, 
modern statistical applications often involve datasets in which each observation is a function such as a curve or a surface.
As a result, the framework of functional data analysis has become increasingly relevant~\citep{Ramsay_Silverman2005, Ferraty_Vieu2006, Hsing_Eubank2015}.
As in the multivariate setting, it is important to assess how central or outlying a given function is within a sample.
This need has led to the development of \emph{functional depth}, which has been an active research topic for over two decades.
Beginning with the work of~\citet{Fraiman_Muniz2001}, several notions of functional depth have been proposed and studied,
including those by~\citet{Lopez_Romo2009, Mosler2013, Chakraborty_Chaudhuri2014, Nagy_etal2016, Narisetty_Nair2016, BBC20, YDL25RHD}.

We now denote by $\HH$ a (typically infinite-dimensional) Hilbert space equipped with the inner product $\inner{\cdot, \cdot}$ and the associated norm $\norm{\cdot}$. 
A natural approach is to generalize a depth function on the Euclidean space $\R^d$---such as halfspace depth or projection depth~\eqref{eqFPD}---to this more general functional setting by simply replacing $\R^d$ in~\eqref{eqFPD} by a Hilbert space $\HH$ and by considering the unit sphere $\SS = \{y \in \HH \colon \norm{y} = 1\}$ in $\HH$. 
However, it is well-known that a naive extension of multivariate depth functions to functional data may suffer from the so-called \emph{degeneracy}, meaning that the depth is zero almost everywhere. This was shown by \citet[Theorem~3]{Dutta_etal2011} and \citet{Kuelbs_Zinn2013} for halfspace depth, and by \citet{Chakraborty_Chaudhuri2014} for both halfspace and projection depths.\footnote{\citet[Theorem~2.2]{Chakraborty_Chaudhuri2014} show that the usual projection depth degenerates in the space of continuous functions on $[0,1]$ under a Gaussian underlying distribution.
Our Theorem~\ref{thmFPDdege} shows that this is true in any infinite-dimensional Hilbert space.}
As we show in the following result, the naive extension of the projection depth degenerates for any Gaussian distribution in an infinite-dimensional~$\HH$.
The proof of Theorem~\ref{thmFPDdege} is in the Appendix.

\begin{theorem}[Degeneracy of projection depth] \label{thmFPDdege}
Suppose $\dim(\HH)=\infty$ and $X \sim P_X$ is a Gaussian random element on $\HH$ with positive definite covariance operator. Then, for an independent copy $X_0$ of $X$, we have $D(X_0; P_X) = 0$ almost surely.
\end{theorem}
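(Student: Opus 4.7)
The plan is to show that the supremum of outlyingness $\sup_{v \in \SS} O_v(X_0;P_X)$ is almost surely infinite, which is equivalent to $D(X_0;P_X)=0$ by the definition \eqref{eqFPD}. I would exploit the spectral structure of the covariance operator: since $P_X$ is Gaussian with positive definite (and necessarily trace-class) covariance operator $C$ on an infinite-dimensional Hilbert space, $C$ admits a complete orthonormal eigenbasis $(e_i)_{i\in\N}$ with strictly positive eigenvalues $\lambda_i$, infinitely many of which are nonzero. The upper bound on $D(X_0;P_X)$ is obtained simply by restricting the supremum to the countable family of directions $\{e_i\}_{i\in\N}$, which reduces the problem to a one-dimensional tail argument.

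First I would reduce to the centered case: since the median and MAD are translation-equivariant and the projection depth is invariant under translations of the underlying distribution, we may assume the mean of $X$ is zero. Then for each eigenvector $e_i$, $\inner{X, e_i} \sim N(0, \lambda_i)$, so $\med[\inner{X, e_i}] = 0$ and $\MAD[\inner{X, e_i}] = c\sqrt{\lambda_i}$, where $c = \Phi^{-1}(3/4)$ is the MAD of a standard normal. Applied to the independent copy $X_0$, this gives
\begin{equation*}
O_{e_i}(X_0; P_X) = \frac{|\inner{X_0, e_i}|}{c\sqrt{\lambda_i}} = \frac{|Z_i|}{c},
\end{equation*}
where $Z_i = \inner{X_0, e_i}/\sqrt{\lambda_i}$. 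A standard fact for Gaussian random elements in a Hilbert space is that the Karhunen--Lo\`{e}ve coordinates $Z_i$ are independent $N(0,1)$ random variables, so I can treat them as an i.i.d.\ standard Gaussian sequence.

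The final step is to show $\sup_i |Z_i| = \infty$ a.s. For any $M > 0$, the events $\{|Z_i| > M\}$ are independent with common positive probability, so $\sum_i \pr(|Z_i| > M) = \infty$. The second Borel--Cantelli lemma then gives $|Z_i| > M$ infinitely often a.s., and letting $M \to \infty$ along a countable sequence yields $\sup_i |Z_i| = \infty$ almost surely. Hence
\begin{equation*}
\sup_{v\in\SS} O_v(X_0; P_X) \;\geq\; \sup_{i\in\N} O_{e_i}(X_0; P_X) \;=\; \frac{1}{c}\sup_{i\in\N} |Z_i| \;=\; \infty \quad \text{a.s.,}
\end{equation*}
which forces $D(X_0; P_X) = 0$ a.s.

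There is no genuinely hard step here; the only points requiring care are the reduction to mean zero (invoking the standard definition of a Gaussian measure on $\HH$ and equivariance of $\med$ and $\MAD$) and the justification that $C$ having infinitely many strictly positive eigenvalues is guaranteed by the joint assumptions that $\dim(\HH) = \infty$ and $C$ is positive definite. The independence of the KL coordinates is the structural fact that makes the Borel--Cantelli argument available; without it one would need concentration estimates on $\sup_{\norm{v}=1}|\inner{X_0, v}|/\sqrt{\inner{Cv,v}}$ along an appropriately chosen sequence of directions, which would be considerably more delicate.
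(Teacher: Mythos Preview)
Your proposal is correct and essentially mirrors the paper's own proof: both center the distribution, restrict the supremum to the eigenvectors of the covariance operator, observe that the standardized Karhunen--Lo\`eve coordinates are i.i.d.\ standard Gaussian, and then argue that $\sup_i |Z_i|=\infty$ a.s. The only cosmetic difference is that the paper bounds $\pr(\sup_i |Z_i|\le M)$ directly by $(2\Phi(M)-1)^J\to 0$, whereas you invoke the second Borel--Cantelli lemma; the two arguments are interchangeable.
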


As a consequence, the naive extension of projection depth to a functional setup does not provide any sensible ranking of data, already in the canonical case of a Gaussian random variable. 
The core issue in infinite-dimensional spaces $\HH$ is that the set of directions $\SS$ is too large. 
Due to the infinite dimensionality of $\HH$ and the tightness of the distribution of $X \sim P_X$ living in $\HH$, it is possible to find a direction $v \in \SS$ such that  $\MAD[\inner{X, v}]$ is arbitrarily small.
Consequently, the outlyingness~\eqref{eqOut} in such a direction becomes arbitrarily large, which in turn causes the depth to degenerate to zero.  

To overcome the degeneracy identified in Theorem~\ref{thmFPDdege}, 
this paper introduces the \emph{regularized projection depth} (\RPD{}), a novel statistical depth for random elements that take values in a (typically infinite-dimensional) Hilbert space, referred to as \emph{Hilbertian elements} \citep{JP20}. 
The key idea is to regularize the MADs of projections in order
to prevent them from taking extremely small values.
In effect, we exclude the directions along which projected data are highly concentrated around their medians.
This restriction prevents the outlyingness defined in \eqref{eqOut} from diverging to infinity and ensures that the corresponding projection depth \eqref{eqFPD} remains non-degenerate, as shown in Theorem~\ref{thm: nonDeg} below.
We establish several desirable theoretical properties of the~\RPD{}, including invariance, maximality at center, monotonicity, and continuity, under appropriate modification to the postulates by \cite{Zuo_Serfling2000, Nieto_Battey2016, Gijbels_Nagy2017}. We also show that the sample version of~\RPD{} can be consistently estimated, and that the induced median is robust.

\RPD{} is obtained using a regularization approach, that is, by restricting the set of directions $\SS$ in~\eqref{eqFPD}. This is motivated by the recently introduced regularized halfspace depth \citep{YDL25RHD}, which applies a related idea to construct a halfspace depth that is suitable for functional data. Although both these depths use regularization, the two approaches are fundamentally different.
The definition of regularized halfspace depth depends on the covariance operator and involves the reproducing kernel Hilbert space norm. As a result, that depth inherits technical limitations and requires strong assumptions. In contrast, \RPD{} does not require the existence of any moment of the underlying distribution, which makes \RPD{} suitable even for heavy-tailed data.
Moreover, \RPD{} exhibits greater robustness as it is defined in a fully nonparametric manner. Since \RPD{} does not rely on the covariance operator, its sample version also avoids cumbersome dependencies on principal components and the associated choice of truncation parameters as encountered in \citet{YDL25RHD}.
In summary, \RPD{} is conceptually simpler and more broadly applicable than the functional depth from~\cite{YDL25RHD}; additional advantages of \RPD{} will appear in its performance in practical tasks.

\subsection*{Contributions and structure of the paper}

In Section~\ref{sec2}, we introduce \RPD{}, and formally establish its non-degeneracy. Section~\ref{sec3} establishes the main theoretical properties of \RPD{} within a general Hilbert space framework, including its sample version consistency, and robustness of the induced median. Section~\ref{sec4} discusses some practical aspects of the~\RPD{}. We apply it to outlier detection and showcase its strengths with comparison to the classical functional depths. Proofs of all theoretical results are given in the Appendix.

\subsection*{Notations}

We consider a separable (typically infinite-dimensional) Hilbert space $\HH$ with inner product $\inner{\cdot, \cdot}$ and the induced norm $\norm{\cdot}$. Let $\SS = \set{v \in \HH \st \norm{v}=1}$ denote the unit sphere in $\HH$. For a set $A \subseteq \HH$, we use $\Clo{A}$ and $\spn{A}$ to denote its closure and linear span, respectively. The notation $x_k \to x$ ($x_k \weakto x$) stands for (weak) convergence in $\HH$ as $k \to \infty$~\citep[Definition~3.2.10]{Hsing_Eubank2015}.
All random variables are defined on a common probability space $\left( \Omega, \mathcal F, \pr \right)$. Throughout the paper, “a.s.” stands for “almost surely.” The set of all Borel probability measures on $\HH$ is denoted by $\mathcal{P}(\HH)$. For a random variable $X \sim P_X \in \P{\HH}$ and a measurable map $\varphi$ from $\HH$ to another Hilbert space, we write $P_{\varphi(X)}$ for the distribution of $\varphi(X)$.
The symbol \scalebox{0.8}{$\eqdis$} denotes equality in distribution. 
For real-valued random variables $\{Z_k\}_{k=1}^\infty$ and $Z$,
the notation 
$Z_k \distto Z$ ($Z_k \asto Z$) denotes convergence in distribution (a.s.) of $\{Z_k\}_{k=1}^\infty$ to $Z$ as $k \to \infty$.

\section{Regularized projection depth} \label{sec2}

Inspired by the approach of~\cite{YDL25RHD},
we aim to incorporate regularization into the definition of the classical projection depth
in order to avoid the degeneracy described in Theorem~\ref{thmFPDdege}.
To do so, we restrict the set of directions considered in its definition in~\eqref{eqFPD}.

\begin{definition}
For $\beta \geq 0$ and $P_X \in \P{\HH}$, we consider the set of regularized directions
\begin{equation}\label{eqRegDir}
    \VV_{\beta} = \VV_{\beta}(P_X) = \set{v\in\SS\st \MAD[\inner{X,v}]\geq \beta}.
\end{equation}
The \emph{regularized projection depth (\RPD{})} of $x \in \HH$ w.r.t. $P_X$ is defined as
	\begin{align}
		D_{\beta}(x;P_X)
		= \inf_{v \in \VV_{\beta}} \left( 1 + O_v(x; P_X) \right)^{-1}, \label{eqRPD}
	\end{align}
where the projection outlyingness $O_v$ is given in~\eqref{eqOut}.
\end{definition}

\RPD{} with $\beta=0$ is the classical projection depth. 
In the regularization step, we consider only those $v \in \SS$ that satisfy $\MAD[\inner{X,v}] \geq \beta$, that is, directions $v$ onto which projections of $X \sim P_X$ are sufficiently dispersed.
For $\beta>0$, this ensures that \RPD{} remains non-degenerate, as shown in Theorem~\ref{thm: nonDeg} below.

Note that large values of $\beta$ lead to stronger regularization, 
i.e., a smaller set $\VV_{\beta}$.
In the extreme case, the direction set $\VV_{\beta}$ may be empty, which is undesirable.
Therefore, throughout most of the paper, we assume $\VV_{\beta}\neq\emptyset$.
Reasonable choices of $\beta$ are discussed in Theorem~\ref{thm: Vprop}.
In practice (see Section~\ref{sec4}), we set $\beta$ to be the $u$-quantile of $\MAD[\inner{X,V}]$,
where $V$ is independent of $X$ and follows a distribution supported in~$\SS$. 
Here $u \in [0,1)$, where smaller values of $u$ correspond to weaker regularization. 
This construction ensures $\VV_{\beta}\neq\emptyset$.

The following theorem shows that $\beta>0$ guarantees non-degeneracy of \RPD{}. In contrast to the existing related results \citep{Kuelbs_Zinn2013, Dutta_etal2011, Chakraborty_Chaudhuri2014, YDL25RHD}, the non-degeneracy result in Theorem~\ref{thm: nonDeg} holds without stringent conditions on the distribution of $X$.

\begin{theorem}[Non-degeneracy of~\RPD{}]\label{thm: nonDeg}
    Let $X \sim P_X\in \P{\HH}$ and $\beta>0$ be such that $\VV_{\beta}\neq\emptyset$.
    Then $D_{\beta}(x;P_X)$ is positive for all $x\in\HH$.
\end{theorem}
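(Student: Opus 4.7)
The strategy is to uniformly bound the outlyingness $O_v(x;P_X)$ over $v\in\VV_{\beta}$. Since $\MAD[\inner{X,v}]\geq\beta>0$ for every $v\in\VV_{\beta}$, the denominator in~\eqref{eqOut} is controlled by definition. So the whole job reduces to controlling the numerator $\abs{\inner{x,v}-\med[\inner{X,v}]}$ uniformly in $v\in\SS$. Once this is done, one gets
\[
\sup_{v\in\VV_{\beta}} O_v(x;P_X)\leq \frac{\norm{x}+M}{\beta}<\infty
\]
for a suitable constant $M$ depending only on $P_X$, and hence $D_{\beta}(x;P_X)\geq \beta/(\beta+\norm{x}+M)>0$.

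For the numerator, Cauchy--Schwarz immediately gives $\abs{\inner{x,v}}\leq\norm{x}$ for every $v\in\SS$, so only $\abs{\med[\inner{X,v}]}$ requires work. The plan is to use tightness of the Borel measure $P_X$ on the separable Hilbert space $\HH$: pick $R>0$ large enough that $\pr(\norm{X}\leq R)>1/2$, which is possible by tightness. Then for every $v\in\SS$ one has $\pr(\abs{\inner{X,v}}\leq R)\geq \pr(\norm{X}\leq R)>1/2$ by Cauchy--Schwarz, so both tail probabilities $\pr(\inner{X,v}>R)$ and $\pr(\inner{X,v}<-R)$ are strictly less than $1/2$.

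The key step is then to show that this forces the whole median set to lie in $[-R,R]$. Indeed, if some $m>R$ satisfied $\pr(\inner{X,v}\geq m)\geq 1/2$, then $\pr(\inner{X,v}>R)\geq \pr(\inner{X,v}\geq m)\geq 1/2$, contradicting the previous display; the symmetric argument rules out $m<-R$. According to the footnote definition, $\med[\inner{X,v}]$ is the midpoint of the (nonempty) median set, so $\abs{\med[\inner{X,v}]}\leq R$ uniformly in $v\in\SS$. Setting $M=R$ finishes the uniform bound.

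The mild subtlety — and the only place where the proof is not completely mechanical — is the passage from the tightness-derived two-sided tail bound to the uniform bound on the median. It is essentially a routine consequence of the definition of $\med$, but it does rely on reading the definition carefully (medians form an interval, and the midpoint convention keeps $\med[\inner{X,v}]$ inside that interval). Everything else is a straightforward string of inequalities, and no assumption on $P_X$ beyond being a Borel probability measure on a separable Hilbert space is needed — matching the strength claimed in the statement.
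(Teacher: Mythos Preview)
Your proof is correct and follows essentially the same approach as the paper: bound the denominator below by $\beta$ and the numerator above by $\norm{x}+M$ for a uniform constant $M$ controlling $\abs{\med[\inner{X,v}]}$. The only cosmetic difference is that the paper obtains $M=\med[\norm{X}]$ directly from the monotonicity of the median applied to $\inner{X,v}\leq\norm{X}$, whereas you take $M=R$ for any $R$ with $\pr(\norm{X}\leq R)>1/2$; these are equivalent ideas and yield the same conclusion.
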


An important family of well-behaved functional distributions is the \emph{elliptically symmetric distributions}~\citep{Boente_etal2014}. For such distributions, the regularized set of directions~\eqref{eqRegDir} simplifies substantially.
Recall that a random element $X \sim P_X \in \P{\HH}$ is said to be elliptically symmetric if the characteristic function of each projection $\inner{X, v}$ of $X$ is of the form
\begin{equation}\label{eq: ElliptProj}
    e^{i \inner{\mu, v} t} \, \varphi\left(\norm{\Sigma^{1/2} v}^2 t^2\right), \quad t \in \R,
\end{equation}
where $\varphi\colon \R \to \R$, $\mu \in \HH$ is a location parameter, and $\Sigma \colon \HH \to \HH$ is a self-adjoint, positive semi-definite, and compact operator on $\HH$ representing the scatter of $X$.  
Here, for a positive semi-definite operator $A$ on $\HH$, $A^{1/2}$ stands for its square-root operator satisfying $A^{1/2}A^{1/2} = A$.
Gaussian distributions are a canonical example, obtained by choosing $\varphi(t) = \exp(-t/2)$.  
However, the class also includes a broader family of Hilbertian elements of the form $X \eqdis \mu + R G$, where $G$ is a Gaussian random element and $R$ is a non-negative random variable independent of $G$~\cite[cf.][Proposition~2.1]{Boente_etal2014}.

The form of the characteristic function~\eqref{eq: ElliptProj} implies that the standardized projection  
\[
    Y = \frac{\inner{X-\mu,v}}{\norm{\Sigma^{1/2} v}}
\]
of $X$ onto $v \neq 0$
has characteristic function $t \mapsto \varphi(t^2)$, and its distribution is thus independent of $\mu$, $\Sigma$, and $v$. Hence, 
\begin{equation}\label{eq: ElliptMAD}
    \MAD[\inner{X,v}] = \MAD[\inner{X-\mu,v}]=\med[\abs{\inner{X-\mu,v}}] = F_v^{-1}(3/4) = b\,\norm{\Sigma^{1/2} v},
\end{equation}
where $F_v^{-1}$ denotes the quantile function of $\inner{X-\mu,v}$ and $b\geq0$ is the $3/4$-quantile of $Y$. The second and third equality in~\eqref{eq: ElliptMAD} follow from the symmetry of $\inner{X-\mu,v}$.
Note that, without loss of generality, we may assume that 
$b=1$; otherwise, we can reparametrize~\eqref{eq: ElliptProj} with $\Sigma'=b^2\Sigma$.
Thus, in the case of elliptically symmetric distributions, the regularized direction set~\eqref{eqRegDir} can be rewritten as
\begin{equation}\label{eq_DirSet_ECD}
\VV_{\beta}
= \set{v\in\SS\st \norm{\Sigma^{1/2} v} \geq \beta}.
\end{equation}
This simplified form mirrors the structure of the direction set $\mathcal{U}_\alpha = \{ v \in \SS: \norm{\Sigma^{-1/2}v} \leq \alpha \}$ used in the definition of the regularized halfspace depth by~\citet{YDL25RHD}.
Here, we define $\Sigma^{-1/2} \equiv \sum_{j=1}^\infty \sigma_j^{-1/2} (\phi_j \otimes \phi_j)$,
where $(\sigma_j,\phi_j)$ denotes the $j$-th eigenvalue-eigenfunction pair  of $\Sigma$.
Indeed, 
for any $v\in\SS$ with $\norm{\Sigma^{-1/2}v}<\infty$,
we have that
$1=\norm{v}\leq \norm{\Sigma^{-1/2}v}\norm{\Sigma^{1/2}v}$ 
by the Cauchy-Schwarz inequality,
yielding $\mathcal{U}_{\beta^{-1}} \subseteq \VV_\beta$.
Our regularization may thus be viewed as a robust non-parametric extension of the approach of~\citet{YDL25RHD}.

\begin{remark}
    The \RPD{} can be extended to accommodate general location $\mathsf{L}$ and scale $\mathsf{S}$ functionals
    by replacing $\med$ and $\MAD$ with $\mathsf{L}$ and $\mathsf{S}$, respectively;
    see \cite{Zuo2003} for finite-dimensional versions.
    In this general framework, regularization is naturally applied 
    by including directions $v \in \SS$ whose scale $\mathsf{S}(\langle X, v \rangle)$ is at least $\beta > 0$. 
\end{remark}

\section{Theoretical properties} \label{sec3}

Core properties of statistical depth functions have been studied both in finite-dimensional \citep{Liu1990, Zuo_Serfling2000, Mosler_Mozharovskyi2022} and functional settings~\citep{Nieto_Battey2016, Gijbels_Nagy2017}.
We build on this framework to examine to what extent the~\RPD{} retains these desirable features. In Subsection~\ref{sec3.1} we introduce the basic properties of \RPD{}, in Subsection~\ref{sec3.2} we establish a consistency result for \RPD{}, and in Subsection~\ref{sec3.3} we show that the induced median is robust.

\subsection{Basic properties of regularized projection depth}\label{sec3.1}

The \RPD{} relies on the regularized direction set~$\VV_{\beta}$ defined in~\eqref{eqRegDir}.
Weak regularization (small~$\beta$) yields a direction set close to the full unit sphere~$\SS$,
while stronger regularization (large $\beta$) progressively shrinks $\VV_{\beta}$.
We begin by analyzing key properties of this set, which directly influence the behavior of the depth.

\begin{theorem}[Properties of the direction set]\label{thm: Vprop}
    Let $X\sim P_X \in \P{\HH}$. Then
    \begin{enumerate}[label=\upshape{(V\arabic*)}]
    
    \item\label{V1} $\VV_{\beta}$ is antipodally symmetric; that is, $v \in \VV_{\beta}$ if and only if $-v \in \VV_{\beta}$.
    
    \item\label{V2} If $P_X$ has contiguous support,\footnote{We say that a random variable $X \sim P_X \in \P{\HH}$ has contiguous support \citep{Kong_Zuo2010} if for each $v \in \SS$, the support of the real random variable $\inner{X,v}$ is connected.} then both $v \mapsto \med[\inner{X,v}]$ and $v \mapsto \MAD[\inner{X,v}]$ are continuous mappings on $\HH$. Consequently, $\VV_{\beta}$ is closed and complete.

    \item\label{V3} Suppose that $\beta$ is chosen as the $u$-quantile of $\MAD[\inner{X,V}]$, 
    for some $u \in [0,1)$, where $V$ follows a distribution on~$\SS$. Then $\VV_{\beta}$ is non-empty.
    \end{enumerate}
\end{theorem}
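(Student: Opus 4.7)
The plan for \ref{V1} is to exploit the basic symmetry of the median and MAD functionals. For any real random variable $U$, the midpoint definition of the median gives $\med[-U] = -\med[U]$, whence
$\MAD[-U] = \med[|-U + \med[U]|] = \med[|U - \med[U]|] = \MAD[U]$.
Applying this to $U = \inner{X,v}$ yields $\MAD[\inner{X,-v}] = \MAD[\inner{X,v}]$, so the defining inequality in \eqref{eqRegDir} is preserved under $v \mapsto -v$.

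The main work is in \ref{V2}. I would prove continuity of $v \mapsto \med[\inner{X,v}]$ first, and then deduce continuity of $v \mapsto \MAD[\inner{X,v}]$ from it. Given $v_k \to v$ in $\HH$, Cauchy-Schwarz gives $|\inner{X,v_k}-\inner{X,v}| \leq \norm{X}\,\norm{v_k-v} \to 0$ almost surely, hence $\inner{X,v_k} \distto \inner{X,v}$. The contiguous-support hypothesis guarantees that the CDF of $\inner{X,v}$ is strictly increasing on its (interval) support, so the set of values $m$ satisfying the midpoint-defining inequalities reduces to a singleton and the median coincides with the standard left-continuous $1/2$-quantile. The classical quantile-convergence lemma (if $F_k \distto F$ and $F$ has a unique $p$-quantile, then $F_k^{-1}(p) \to F^{-1}(p)$) then delivers $\med[\inner{X,v_k}] \to \med[\inner{X,v}]$. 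For the MAD step, Slutsky's lemma combined with the continuous mapping theorem gives $|\inner{X,v_k}-\med[\inner{X,v_k}]| \distto |\inner{X,v}-\med[\inner{X,v}]|$; the support of the limit is the continuous image of a connected set, hence a subinterval of $[0,\infty)$, so its CDF is strictly increasing at its own median, and a second application of the quantile-convergence lemma gives $\MAD[\inner{X,v_k}] \to \MAD[\inner{X,v}]$. The degenerate case $v=0$ is handled separately using $\inner{X,0}=0$, whose Dirac limit has a unique $p$-quantile for every $p \in (0,1)$. Closedness and completeness of $\VV_\beta$ then follow immediately: $\VV_\beta$ is the intersection of the continuous preimage of $[\beta,\infty)$ with the closed sphere $\SS$, hence closed in $\HH$, and a closed subset of the complete metric space $\HH$ is itself complete.

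For \ref{V3}, the substance is simply that a $u$-quantile with $u<1$ leaves positive mass above. I would first observe that $v \mapsto \MAD[\inner{X,v}]$ is Borel measurable on $\SS$, by the standard argument using joint measurability of $(v,\omega) \mapsto \inner{X(\omega),v}$ (which itself follows from linearity of the inner product and measurability of $X$), so $\MAD[\inner{X,V}]$ is a bona fide random variable and its $u$-quantile $\beta$ is well defined. The midpoint convention for quantiles then yields $\pr(\MAD[\inner{X,V}] \geq \beta) \geq 1-u > 0$, so the law of $V$ assigns positive mass to $\VV_\beta$, which is therefore non-empty.

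I expect the main technical obstacle to be the continuity of $\MAD$ in \ref{V2}: one must control simultaneously the convergence in distribution of $\inner{X,v_k}$, the convergence of the inner median nested inside the absolute value, and the regularity of the limiting CDF at its own median. The contiguous-support hypothesis is exactly what keeps the quantile-convergence lemma applicable at each of these three stages; without it, a plateau of the CDF at height $1/2$ or $3/4$ would destroy the uniqueness of the quantile and could produce jumps in the mapping $v \mapsto \MAD[\inner{X,v}]$.
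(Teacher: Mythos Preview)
Your proof is correct and follows essentially the same approach as the paper: for \ref{V1} and \ref{V3} the arguments are identical, and for \ref{V2} both you and the paper establish $\inner{X,v_k}\distto\inner{X,v}$ and then invoke continuity of the median functional under weak convergence when the limit has connected support (the paper cites Huber--Ronchetti, you state the quantile-convergence lemma directly), iterating once more for the MAD. You are in fact slightly more careful than the paper in explicitly verifying that $|\inner{X,v}-\med[\inner{X,v}]|$ inherits connected support, in treating the degenerate point $v=0$ (the statement asserts continuity on all of $\HH$, not only on $\SS$), and in noting the measurability needed in \ref{V3}.
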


Part~\ref{V1} states that the regularized direction set is always symmetric, which is natural since $O_v(\cdot; P_X) = O_{-v}(\cdot; P_X)$. 
Under the mild assumption of contiguous support, the set is also closed and complete; see~\ref{V2}. 
Part~\ref{V3} gives a sufficient condition for the non-emptiness of $\VV_{\beta}$.
This strategy for selecting $\beta$ is used in our practical implementation in Section~\ref{sec4}.
Note that, since the mapping $v \mapsto \MAD[\inner{X,v}]$ can behave quite wildly (in contrast to the analogous function involving the covariance operator), one cannot say much about the exact form of $\VV_\beta$ unless $P_X$ is elliptically symmetric; see~\eqref{eq_DirSet_ECD}.

Further, we investigate properties of the~\RPD{}, including quasi-concavity, invariance, symmetry behavior, and vanishing at infinity.
Recall that the $\tau$-upper level set of the~\RPD{} is defined as
\begin{equation}\label{eq: UpperLevelSet}
    D_{\beta}^\tau(P_X) = \set{x \in \HH \st D_{\beta}(x; P_X) \geq \tau}.
\end{equation}
These level sets describe the central regions of $P_X$ induced by the depth $D_\beta$.

\begin{theorem}[Inherited \RPD{} properties]\label{thm: RPDPropInh}
    Let $X \sim P_X \in \P{\HH}$ and assume that $\beta\geq 0$ is chosen such that~$\VV_{\beta}\neq\emptyset$. Then:
    \begin{enumerate}[label=\upshape{(P\arabic*)}]
        \item(\textbf{Quasi-concavity})\label{P1} $D_{\beta}(\cdot;P_X)$ is quasi-concave; that is, for all $\tau \geq 0$, the $\tau$-upper level set $D_{\beta}^\tau(P_X)$ is convex.
        
        \item(\textbf{Median under symmetry}) Assume that either $P_X$ is halfspace symmetric about $\mu\in\HH$ with contiguous support, or that $P_X$ is centrally symmetric about $\mu$.\footnote{Recall that $X \sim P_X\in\P{\HH}$ is said to be halfspace symmetric about $\mu \in \HH$ if, for all $v \in \SS$, it holds that $\pr(\inner{X, v} \geq \inner{\mu, v}) \geq 1/2$. Furthermore, $X$ is centrally symmetric about $\mu$ if $X - \mu \eqdis \mu - X$.} Then, $D_{\beta}(\cdot;P_X)$ attains its maximum at $\mu$.

        \item(\textbf{Level sets under symmetry}) Assume that $P_X$ is centrally symmetric about $\mu\in\HH$. Then $D_{\beta}(\cdot; P_X)$ is centrally symmetric about $\mu$; that is $D_{\beta}(x;P_X)=D_{\beta}(2\mu-x;P_X)$ for all $x\in\HH$. In particular, each level set $D_\beta^{\tau}(P_X)$ is symmetric around $\mu$.

        \item(\textbf{Monotonicity on rays}) Assume that $z \in \HH$ is a maximizer of $D_{\beta}(\cdot;P_X)$. Then the following statements hold:
        \begin{itemize}
            \item $D_{\beta}(z; P_X) > \inf_{x \in \HH} D_{\beta}(x; P_X) = 0$.
            \item For all $x \in \HH$ and $\delta \in [0, 1]$, we have $D_{\beta}(x; P_X) \leq D_{\beta}(z + \delta (x - z); P_X)$.
        \end{itemize}
    \end{enumerate}
\end{theorem}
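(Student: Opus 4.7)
The plan treats the four properties in turn, with (P1) providing the structural backbone that also powers (P4). For (P1), I would exploit the infimum-of-quasi-concave structure of $D_\beta$: for each fixed $v \in \VV_\beta$ and $\tau \in (0, 1]$, the $\tau$-upper level set of $x \mapsto (1 + O_v(x; P_X))^{-1}$ is the slab
\[
    \set{x \in \HH \st \abs{\inner{x, v} - \med[\inner{X, v}]} \leq (\tau^{-1} - 1)\,\MAD[\inner{X, v}]},
\]
which is convex (the cases $\tau = 0$ and $\tau > 1$ being trivial). Since $D_\beta^\tau(P_X)$ is the intersection of these slabs over $v \in \VV_\beta$, it is convex.

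For (P2), the goal is to show $\med[\inner{X, v}] = \inner{\mu, v}$ for every $v \in \VV_\beta$; this forces $O_v(\mu; P_X) = 0$ for all such $v$ and gives $D_\beta(\mu; P_X) = 1$, the global maximum. Under central symmetry, $\inner{X - \mu, v}$ is symmetric about zero, hence has median $0$. Under halfspace symmetry, applying the defining inequality to both $v$ and $-v$ shows $\inner{\mu, v}$ is a candidate median of $\inner{X, v}$; connectedness of the support of $\inner{X, v}$ then forces the candidate-median set to be a singleton, so the midpoint definition gives $\med[\inner{X, v}] = \inner{\mu, v}$. This uniqueness step is the most delicate part of (P2).

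For (P3), central symmetry yields both $\med[\inner{X, v}] = \inner{\mu, v}$ and invariance of $\MAD[\inner{X, v}]$ under $X \mapsto 2\mu - X$. The identity
\[
    \abs{\inner{2\mu - x, v} - \inner{\mu, v}} = \abs{\inner{\mu - x, v}} = \abs{\inner{x, v} - \inner{\mu, v}}
\]
gives $O_v(2\mu - x; P_X) = O_v(x; P_X)$ for every $v \in \VV_\beta$; taking the infimum over $\VV_\beta$ yields $D_\beta(2\mu - x; P_X) = D_\beta(x; P_X)$, from which the $\mu$-symmetry of each $D_\beta^\tau(P_X)$ is immediate.

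For (P4), both bullets follow from earlier results. Theorem~\ref{thm: nonDeg} gives $D_\beta(z; P_X) > 0$, and to show $\inf_x D_\beta(x; P_X) = 0$ I would fix any $v \in \VV_\beta$ and take $x_n = nv$: since $\inner{x_n, v} = n$ and $\MAD[\inner{X, v}] \geq \beta > 0$, we get $O_v(x_n; P_X) \to \infty$ and hence $D_\beta(x_n; P_X) \to 0$. For the monotonicity bullet, quasi-concavity from (P1) combined with $D_\beta(z; P_X) \geq D_\beta(x; P_X)$ (because $z$ is a maximizer) delivers
\[
    D_\beta(z + \delta(x - z); P_X) = D_\beta((1 - \delta)z + \delta x; P_X) \geq \min\set{D_\beta(z; P_X), D_\beta(x; P_X)} = D_\beta(x; P_X).
\]
Overall, (P1), (P3), and (P4) are essentially bookkeeping once the slab structure of the direction-wise outlyingness is in hand, while the halfspace-symmetric case of (P2) is the main obstacle and requires careful handling of the midpoint median definition.
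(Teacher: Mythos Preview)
Your proposal is correct and follows essentially the same approach as the paper: the slab intersection for (P1), the reduction to $\med[\inner{X,v}]=\inner{\mu,v}$ for (P2), and quasi-concavity plus a ray argument for (P4) all match the paper's proof. The only cosmetic difference is in (P3), where the paper chains translation invariance as $D_{\beta}(x;P_X)=D_{\beta}(x-\mu;P_{X-\mu})=D_{\beta}(\mu-x;P_{\mu-X})=D_{\beta}(2\mu-x;P_X)$ rather than computing $O_v(2\mu-x;P_X)=O_v(x;P_X)$ directly as you do; both arguments are equivalent and equally short.
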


These properties---quasi-concavity, symmetry behavior and monotonicity---hold for both \RPD{} and classical projection depth,
arising from the structural features of the outlyingness $O_v(\cdot; P_X)$ rather than the specific regularized direction set $\VV_{\beta}$. 
In contrast, some classical properties---such as vanishing at infinity and affine invariance---do not directly extend to the regularized functional setting. 
This is due to the fact that these properties depend on the geometry of the direction set~$\VV_{\beta}$, 
which is explicitly built into the definition of the~\RPD{}. The following theorem summarizes characteristic features of~\RPD{}.

\begin{theorem}[Novel \RPD{} properties]\label{thm: RPDPropNov}
    Let $X \sim P_X \in \P{\HH}$ and assume that $\beta\geq 0$ is chosen so that~$\VV_{\beta}\neq\emptyset$. Then:
    \begin{enumerate}[label=\upshape{(N\arabic*)}]
        
        \item(\textbf{Continuity})\label{N1} If $\beta > 0$, then $D_{\beta}(\cdot; P_X)$ is $(1/\beta)$-Lipschitz continuous and thus uniformly continuous on $\HH$. As a consequence, the $\tau$-upper level set $D_{\beta}^\tau(P_X)$ is closed and complete for all $\tau \geq 0$.

        \item(\textbf{Orthogonal invariance})\label{N2} The \RPD{} is invariant under shifts and unitary transformations. That is, if $\mathcal{T}x = \mathcal{S}x + e$ for all $x \in \HH$, where $\mathcal{S} \colon \HH \to \HH$ is a unitary operator\footnote{A bounded linear operator $\mathcal{S} \colon \HH \to \HH$ is called unitary if and only if $\mathcal{S}\mathcal{S}^*=\mathcal{S}^*\mathcal{S}=\mathcal{I}$, where $\mathcal{I}$ is the identity operator and $\mathcal{S}^*$ is the adjoint operator of $\mathcal{S}$.} and $e \in \HH$ is a fixed vector, then
        \[
        D_{\beta}(\mathcal{T}x; P_{\mathcal{T}X}) = D_{\beta}(x; P_X) \quad\text{for all } x\in\HH.
        \]

        \item(\textbf{Vanishing at infinity})\label{N3} Let $\set{x_k}_{k=1}^\infty \subset \HH\setminus\{0\}$ be a sequence such that $\norm{x_k} \to \infty$ as $k \to \infty$, and assume that  
        \begin{equation}\label{condVanish}
            \liminf_{k \to \infty} \sup_{v \in \VV_{\beta}} \inner{\frac{x_k}{\norm{x_k}}, v} > 0.
        \end{equation}
        \sloppy Then $D_{\beta}(x_k; P_X) \to 0$ as $k \to \infty$. In particular, if $\set{x_k}_{k=1}^\infty \subset \spn{\VV_{\beta}}$ and $\dim \spn{\{x_k\}_{k=1}^\infty} < \infty$, then condition~\eqref{condVanish} is satisfied.
        
        \item(\textbf{Effective domain reduction})\label{N4} Denote by $\widehat{x}$ is the orthogonal projection of $x$ onto $\Clo{\spn{\VV_{\beta}}}$. Then it holds that $D_{\beta}(x; P_X) = D_{\beta}(\widehat{x}; P_X)$. 
    \end{enumerate}
\end{theorem}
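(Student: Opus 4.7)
The plan is to exploit the joint structure of the outlyingness $O_v$ and the regularized direction set $\VV_{\beta}$, treating the four items separately but relying on common ingredients. For~\ref{N1}, I would fix $v \in \VV_{\beta}$ and observe that $x \mapsto \langle x, v\rangle$ is $1$-Lipschitz by Cauchy--Schwarz while the denominator of $O_v$ is at least $\beta$; hence each $O_v(\cdot; P_X)$ is $(1/\beta)$-Lipschitz uniformly in $v$. Writing $D_{\beta}(x; P_X) = (1 + \sup_{v \in \VV_{\beta}} O_v(x; P_X))^{-1}$, the sup preserves the Lipschitz constant and $t \mapsto 1/(1+t)$ is $1$-Lipschitz on $[0,\infty)$, giving the claimed bound; closedness (and hence completeness in the complete space $\HH$) of level sets then follows from continuity. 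For~\ref{N2}, a direct computation using shift-invariance of $\med$ and $\MAD$ shows $v \in \VV_{\beta}(P_{\mathcal{T}X}) \iff \mathcal{S}^* v \in \VV_{\beta}(P_X)$ and $O_v(\mathcal{T}x; P_{\mathcal{T}X}) = O_{\mathcal{S}^* v}(x; P_X)$; the change of variables $u = \mathcal{S}^* v$, which is a bijection between these two direction sets (unitarity preserves $\SS$), gives equality of the infima. For~\ref{N4}, every $v \in \VV_{\beta}$ lies in $\Clo{\spn{\VV_{\beta}}}$ and $x - \widehat{x}$ is orthogonal to this subspace, so $\langle x, v\rangle = \langle \widehat{x}, v\rangle$ and hence $O_v(x; P_X) = O_v(\widehat{x}; P_X)$ for each $v \in \VV_{\beta}$; the infima defining $D_{\beta}(x; P_X)$ and $D_{\beta}(\widehat{x}; P_X)$ agree term-by-term.

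The main work is~\ref{N3}. First I would establish two uniform bounds for $v \in \VV_{\beta}$. Tightness of $P_X$ on $\HH$ yields $K>0$ with $\pr(\|X\| \leq K) \geq 3/4$, so by Cauchy--Schwarz $\pr(|\langle X, v\rangle| \leq K) \geq 3/4$ for every $v \in \SS$, which forces $|\med[\langle X, v\rangle]| \leq K$ and $\MAD[\langle X, v\rangle] \leq 2K$ uniformly in $v \in \SS$. Second, Theorem~\ref{thm: nonDeg} applied at $x = 0$ gives $M_0 := \sup_{v \in \VV_{\beta}} O_v(0; P_X) < \infty$, whence $|\med[\langle X, v\rangle]| \leq M_0 \, \MAD[\langle X, v\rangle]$ for every $v \in \VV_{\beta}$. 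Now, for $k$ large, condition~\eqref{condVanish} lets me select $v_k \in \VV_{\beta}$ with $\langle x_k/\|x_k\|, v_k\rangle \geq c/2$, where $c > 0$ is the liminf in~\eqref{condVanish}. Combining both bounds,
\[
    O_{v_k}(x_k; P_X) \;\geq\; \frac{|\langle x_k, v_k\rangle| - |\med[\langle X, v_k\rangle]|}{\MAD[\langle X, v_k\rangle]} \;\geq\; \frac{c\|x_k\|}{4K} - M_0,
\]
which diverges to $\infty$, so $D_{\beta}(x_k; P_X) \leq (1 + O_{v_k}(x_k; P_X))^{-1} \to 0$.

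For the ``in particular'' clause, antipodal symmetry~\ref{V1} gives $\sup_{v \in \VV_{\beta}} \langle x, v\rangle = \sup_{v \in \VV_{\beta}} |\langle x, v\rangle|$, which is a $1$-Lipschitz function of $x$. Restricted to the finite-dimensional subspace $W = \spn{\{x_k\}_{k=1}^\infty} \subseteq \spn{\VV_{\beta}}$, this map is strictly positive on $W \setminus \{0\}$: if it vanished at some $x \in W \setminus \{0\}$, then $x$ would be orthogonal to every $v \in \VV_{\beta}$ and hence to $\spn{\VV_{\beta}} \supseteq W \ni x$, forcing $x = 0$, a contradiction. Compactness of the unit sphere of $W$ then produces a uniform positive lower bound, verifying~\eqref{condVanish}. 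I expect~\ref{N3} to be the main obstacle: the essential point is that tightness of $P_X$ together with non-degeneracy at a single point---nothing more---yields uniform control over $v \in \VV_{\beta}$ of both $|\med[\langle X, v\rangle]|$ and $\MAD[\langle X, v\rangle]$, which is what allows the outlyingness to be pushed to infinity without invoking any moment assumption on $P_X$.
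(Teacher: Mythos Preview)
Your proposal is correct and, for \ref{N1}, \ref{N2}, \ref{N4}, and the first half of \ref{N3}, it matches the paper's proof almost line by line. The paper also bounds $\abs{D_\beta(x;P_X)-D_\beta(y;P_X)}$ by $\abs{\Gamma_x(\beta)-\Gamma_y(\beta)}$ and then by $\beta^{-1}\norm{x-y}$; it proves \ref{N2} via the identity $O_{\mathcal{S}v}(\mathcal{T}x;P_{\mathcal{T}X})=O_v(x;P_X)$ together with $\VV_\beta(P_{\mathcal{T}X})=\mathcal{S}\VV_\beta$; and it obtains \ref{N4} from $\inner{\widehat{x},v}=\inner{x,v}$ for all $v\in\VV_\beta$. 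For the divergence part of \ref{N3}, the paper uses the slightly sharper bounds $\abs{\med[\inner{X,v}]}\leq\med[\norm{X}]$ and $\MAD[\inner{X,v}]\leq 2\,\med[\norm{X}]$ (the same inequalities underlying Theorem~\ref{thm: nonDeg}) instead of your tightness constant $K$ and the auxiliary $M_0$, but the structure of the argument is identical.

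The one genuine difference is your treatment of the ``in particular'' clause of \ref{N3}. The paper proves a separate lemma: for any non-empty antipodally symmetric $\mathcal{U}\subseteq\SS$ with $\dim\spn{\mathcal{U}}<\infty$ and any sequence $\{u_k\}\subset\spn{\mathcal{U}}\cap\SS$, one has $\inf_k\sup_{v\in\mathcal{U}}\inner{u_k,v}>0$. It then applies this with $\mathcal{U}=\VV_\beta\cap\mathbb{L}$, where $\mathbb{L}=\spn{\{x_k\}}$. Your route instead keeps the full direction set $\VV_\beta$, observes that $x\mapsto\sup_{v\in\VV_\beta}\abs{\inner{x,v}}$ is continuous and strictly positive on $W\setminus\{0\}$ (since vanishing would force $x\perp\spn{\VV_\beta}\supseteq W$), and invokes compactness of the unit sphere of the finite-dimensional $W$. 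Your argument is shorter and sidesteps the need to verify that $\VV_\beta\cap\mathbb{L}$ is non-empty and spans $\mathbb{L}$; the paper's lemma, on the other hand, is a reusable statement about antipodal subsets of the sphere. Both reach the same conclusion.
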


With $\beta > 0$, the~\RPD{} is $1/\beta$-Lipschitz continuous~\ref{N1}, which ensures its regular behavior in the argument $x$; 
the larger the regularization parameter $\beta$, the more stable the depth function becomes.

Invariance under translations and unitary transformations~\ref{N2} is another key property of the \RPD{}, 
capturing essential symmetries such as invariance under coordinate permutations.
In contrast, affine invariance---which is standardly expected from multivariate depths---has no relevance in the functional data setting \citep[Section~2.1]{Gijbels_Nagy2017}.

The~\RPD{} satisfies the vanishing at infinity property~\ref{N3} under an additional geometric condition~\eqref{condVanish}.
Specifically, increasing $\beta$ reduces the direction set $\VV_\beta$,
which, according to~\ref{N3}, limits the depth’s ability to detect extreme deviations perpendicular to these directions.
Consequently, the decay of depth for distant observations is not guaranteed in full generality, but still holds within the constrained geometry of $\VV_\beta$.
Our goal is thus to select the regularization parameters to keep $\VV_\beta$ sufficiently large.
Importantly, the failure to satisfy the vanishing at infinity property is common among functional depths;
see~\citet[Table~1]{Gijbels_Nagy2017} for a summary.
Therefore, property~\ref{N3} provides a valuable guarantee. 

Finally, the \RPD{} value of a point $x \in \HH$ depends solely on its projection onto the closed linear span of the direction set $\VV_\beta$~\ref{N4}.
This effective domain reduction implies that the depth is determined entirely by its behavior on this subspace of the ambient space.
Furthermore, this property emphasizes that regularization not only stabilizes the depth functional,
but also induces an implicit dimension reduction aligned with the most informative directions in the data. In other words, $\Clo{\spn{\VV_{\beta}}}$ carries all the information in $D_{\beta}(\cdot; P_X)$, 
since replacing any point by its projection onto this set leaves the depth value unchanged.

For this reason, it is therefore natural to define the \RPD{} median as
\begin{equation} \label{eq: RPD median}
    \theta(P_X)
    = \argmax\set{D_{\beta}(\theta; P_X)\st\theta \in \Clo{\spn{\VV_{\beta}}}}.
\end{equation}
The \RPD{} median always exists, as shown in the following theorem. Note, however, that the maximizer of $D_{\beta}(\cdot; P_X)$ on $\Clo{\spn{\VV_{\beta}}}$ may not be unique; in such cases, the whole set of maximizers 
    \[  M(P_X) = \left\{ x \in \Clo{\spn{\VV_{\beta}}} \colon D_\beta(x; P_X) = \sup_{y \in \HH} D_\beta(y; P_X) \right\} \subset \HH   \]
is regarded as the set of the \RPD{} medians. Note that the set $M(P_X)$ is convex, but may fail to be compact in $\HH$. Consequently, the usual approach of selecting a single representative as the barycenter of $M(P_X)$ does not apply in the functional setting (because the uniform distribution on a non-compact set $M(P_X) \subset \HH$, necessary to define a barycenter of $M_X$, is not necessarily well defined).

\begin{theorem}[\RPD{} median existence]\label{thm: medianExist}
    Let $X \sim P_X \in \P{\HH}$ and assume that $\beta\geq 0$ is chosen so that~$\VV_{\beta}\neq\emptyset$. Then the \RPD{} median set $M(P_X)$ is non-empty and convex in $\HH$.
\end{theorem}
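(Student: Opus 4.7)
My plan splits the proof into convexity and non-emptiness. Throughout, write $s := \sup_{y \in \HH} D_\beta(y; P_X)$ and $W := \Clo{\spn{\VV_\beta}}$. By the effective domain reduction~\ref{N4}, we have $s = \sup_{y \in W} D_\beta(y; P_X)$ and
\[
    M(P_X) = W \cap \set{y \in \HH \st D_\beta(y; P_X) \geq s}.
\]
Convexity of $M(P_X)$ is then immediate: the upper level set above is convex by quasi-concavity~\ref{P1}, and $W$ is convex as a closed linear subspace, so their intersection is convex.

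For non-emptiness, the case $s = 0$ is trivial since then $D_\beta \equiv 0$ on $W$ and $M(P_X) = W$; so assume $s > 0$ (which is automatic whenever $\beta > 0$ by Theorem~\ref{thm: nonDeg}). The natural route is through weak-topology compactness. First I would verify that $D_\beta(\cdot; P_X)$ is weakly upper semi-continuous on $\HH$: for each $v \in \VV_\beta$, the map $f_v(x) := (1 + O_v(x; P_X))^{-1}$ depends on $x$ only through $\inner{x, v}$, which is continuous in the weak topology, so $f_v$ is weakly continuous; the pointwise infimum $D_\beta = \inf_{v \in \VV_\beta} f_v$ is therefore weakly upper semi-continuous. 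Given any bounded maximizing sequence $\set{y_n} \subset W$, the Banach--Alaoglu theorem yields a weakly convergent subsequence $y_n \weakto y^*$; weak closedness of $W$ gives $y^* \in W$, and weak upper semi-continuity then gives $D_\beta(y^*; P_X) \geq \limsup_n D_\beta(y_n; P_X) = s$, so $y^* \in M(P_X)$.

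The crux is therefore to construct a bounded maximizing sequence. My plan is to fix a countable sequence $\set{v_k}_{k=1}^\infty \subset \VV_\beta$ whose linear span is dense in $W$ (available by separability of $\HH$), set $W_n := \spn{\set{v_1, \ldots, v_n}} \subset \spn{\VV_\beta}$, and let $x_n \in W_n$ be a maximizer of the restriction $D_\beta|_{W_n}$; such $x_n$ exists on each finite-dimensional $W_n$ by the Lipschitz continuity~\ref{N1} together with the vanishing-at-infinity property~\ref{N3} for finite-dimensional subspaces of $\spn{\VV_\beta}$. Density of $\bigcup_n W_n$ in $W$ and continuity of $D_\beta$ then yield $D_\beta(x_n; P_X) \uparrow s$. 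The pointwise bound
\[
    \abs{\inner{x_n, v} - \med[\inner{X,v}]} \leq \left(\tfrac{1}{D_\beta(x_n; P_X)} - 1\right) \MAD[\inner{X,v}], \qquad v \in \VV_\beta,
\]
combined with the elementary estimates $\abs{\med[\inner{X,v}]} \leq \med[\norm{X}]$ and $\MAD[\inner{X,v}] \leq 2\,\med[\norm{X}]$ for unit $v$, delivers uniform control of $\abs{\inner{x_n, v}}$ over all $v \in \VV_\beta$. Promoting this projection-wise control into a uniform norm bound on $\set{x_n}$ inside $W$ is the technically most delicate step and, I expect, the principal obstacle of the proof; I anticipate treating it by a Banach--Steinhaus-type argument applied to the linear functionals $L_n(v) := \inner{x_n, v}$ on $W$, leveraging that $\{v_k\}$ is spanning and that $x_n \in W_n$ lies in the span of only the first $n$ of these test directions.
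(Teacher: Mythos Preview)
Your overall strategy is sound and close in spirit to the paper's, but you take an unnecessary detour and the execution at the acknowledged crux does not go through. The paper works directly with the level sets $G(\tau)=\{x\in W: D_\beta(x;P_X)\geq\tau\}$: these are convex (\ref{P1}) and closed (\ref{N1}), hence weakly closed; the paper then argues that each $G(\tau)$, $\tau\in(0,1]$, is weakly bounded in $W$ by invoking \ref{N3} together with \ref{N1}, so weakly compact by reflexivity and Banach--Alaoglu; and $M(P_X)=G(\tau_0)=\bigcap_{0<\tau<\tau_0}G(\tau)$ is non-empty as a nested intersection of non-empty weakly compact sets. Your weak upper semi-continuity observation (infimum of weakly continuous maps) is a correct and pleasant alternative to the ``convex $+$ closed $\Rightarrow$ weakly closed'' step, and your extraction of a weak limit from a bounded maximizing sequence is fine. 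But the finite-dimensional exhaustion $W_n$ is superfluous: once some sublevel set $G(\tau)$ with $0<\tau<s$ is bounded, \emph{any} maximizing sequence eventually lives there, and you are done without ever building $x_n$ as a maximizer on $W_n$.

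The genuine gap is exactly where you flag it, and the Banach--Steinhaus route you sketch does not close it. Uniform control of $|\inner{x_n,v}|$ over $v\in\VV_\beta$ gives pointwise boundedness of the functionals $L_n(v)=\inner{x_n,v}$ only on the (generally incomplete) subspace $\spn{\VV_\beta}$; pointwise boundedness on a dense subspace does \emph{not} force $\sup_n\|x_n\|<\infty$ --- think of $L_n(v)=n\,v_n$ on $\ell^2$, which is pointwise bounded on finitely supported sequences yet has $\|L_n\|=n$. The extra structure $x_n\in W_n$ does not rescue this either: if the $v_k$ were orthonormal one would only obtain $\|x_n\|\leq C\sqrt{n}$. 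The paper instead derives (weak) boundedness of the entire level set from the depth's decay behaviour \ref{N3} combined with \ref{N1}, rather than from an abstract uniform-boundedness principle; you should follow that line, or at least recognise that the boundedness step must use the vanishing-at-infinity property of $D_\beta$ and not just the projection estimates you wrote down.
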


\subsection{Sample \RPD{} and its consistency}\label{sec3.2}

Let $X \sim P_X \in \P{\HH}$ and $\beta>0$. Consider a random sample $\set{X_i}_{i=1}^n$ of size $n$ with $X_i \eqdis X$, and let $\widehat{P}_n$ denote the corresponding empirical distribution $\widehat{P}_n=n^{-1}\sum_{i=1}^n\delta_{X_i}$, where $\delta_x$ is the Dirac measure at $x\in\HH$. 
For $v\in\SS$,
we write
$\widehat{\med}\left[\set{\inner{X_i, v}}_{i=1}^n\right]$ and
$\widehat{\MAD}\left[\set{\inner{X_i, v}}_{i=1}^n\right]$ for indicating the sample median and sample MAD based on projections $\{\inner{X_i,v}\}_{i=1}^n$,
respectively.\footnote{For a sample $\set{Z_i}_{i=1}^n \subset \R$ with order statistics $Z_{(1)} < \cdots < Z_{(n)}$, we define the sample median as $\widehat{\med}\left[ \set{Z_i}_{i=1}^n\right] = Z_{((n+1)/2)}$ if $n$ is odd and $\widehat{\med}\left[ \set{Z_i}_{i=1}^n\right] = (Z_{(n/2)} + Z_{(n/2+1)})/2$ otherwise. Then, the sample MAD is defined by $\widehat{\MAD}\left[\set{Z_i}_{i=1}^n\right] = \widehat{\med}\left[\set{\abs{Z_i - \widehat{\med}\left[\set{Z_j}_{j=1}^n\right]}}_{i=1}^n\right]$.}
We then define the sample~\RPD{} of $x \in \HH$ as
\begin{equation}
    D_\beta(x; \widehat{P}_n)=\inf_{v\in\widehat{\VV}_{\beta,n}}\left(1+O_v(x; \widehat{P}_n)\right)^{-1}
    =\left(1+\sup_{v\in\widehat{\VV}_{\beta,n}}\frac{\abs{\inner{x,v}-\widehat{\med}\left[\set{\inner{X_i, v}}_{i=1}^n\right]}}{\widehat{\MAD}\left[\set{\inner{X_i, v}}_{i=1}^n\right]}\right)^{-1}, \label{eqRPDsample}
\end{equation}
where 
the sample direction set $\widehat{\VV}_{\beta,n}$ is defined by
\begin{equation*}
    \widehat{\VV}_{\beta,n} 
    = \VV_\beta(\widehat{P}_n)
    = \set{v \in \SS \st \widehat{\MAD}\left[\set{\inner{X_i, v}}_{i=1}^n\right] \geq \beta}.
\end{equation*}

The following theorem establishes the strong consistency of the sample~\RPD{}. It is interesting to note that the proof technique used to derive this result appears to be new in the functional depth literature, and the derived uniform consistency results do not require compactness of the set of functions over which convergence is established.
To state our result, we introduce the maximal outlyingness function $\Gamma_x$ at $x \in \HH$ over $\VV_t$ as
\begin{align} \label{eqMaxOut}
    \Gamma_x(t)=\sup_{v\in\VV_{t}}O_v(x; P_X), \quad t \in [0,\infty). 
\end{align}

\begin{theorem}[Consistency of \RPD{}]\label{thm: consistency}
    Suppose that $X \sim P_X \in \P{\HH}$ has contiguous support and $\beta>0$ is chosen so that~$\VV_{\beta}\neq\emptyset$. 
    Furthermore, assume that $\mathcal{F}\subset\HH$ is a bounded set such that the family $\set{\Gamma_x : x\in\mathcal{F}}$ of maximal outlyingness functions \eqref{eqMaxOut} is equicontinuous at $\beta$, i.e.,
        \begin{equation}\label{consAssump}
            \lim_{t\to \beta}\sup_{x\in\mathcal{F}}\abs{\Gamma_x(t)-\Gamma_x(\beta)}=0.
        \end{equation}
    Then, the sample \RPD{} \eqref{eqRPDsample} is uniformly consistent for the population \RPD{} \eqref{eqRPD} over $\mathcal{F}$ as
        \begin{equation*}
            \sup_{x\in\mathcal{F}}\abs{D_{\beta}(x; \widehat{P}_n) - D_{\beta}(x; P_X)} \xasto[n\to\infty] 0.
        \end{equation*}
\end{theorem}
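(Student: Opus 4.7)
The plan is to reduce the depth difference to the outlyingness supremum via a simple Lipschitz step, then to sandwich the empirical supremum between two values of the population function $\Gamma_x$ evaluated at slightly perturbed regularization levels, and finally to close the loop using the equicontinuity assumption \eqref{consAssump}. Since $\tau \mapsto (1+\tau)^{-1}$ is $1$-Lipschitz on $[0,\infty)$,
\[
\sup_{x \in \mathcal{F}}\abs{D_\beta(x; \widehat{P}_n) - D_\beta(x; P_X)}
\leq
\sup_{x \in \mathcal{F}}\abs{\widehat{S}_n(x) - \Gamma_x(\beta)},
\]
where I abbreviate $\widehat{S}_n(x) = \sup_{v \in \widehat{\VV}_{\beta,n}} O_v(x; \widehat{P}_n)$. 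It therefore suffices to show that this right-hand side vanishes almost surely as $n \to \infty$.

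For each $\eta \in (0,\beta)$, the aim is to prove that almost surely, for all sufficiently large $n$ and uniformly in $x \in \mathcal{F}$,
\[
\Gamma_x(\beta + \eta) - r_n \leq \widehat{S}_n(x) \leq \Gamma_x(\beta - \eta) + r_n,
\]
with a deterministic $r_n = r_n(\eta) \to 0$. The lower bound follows by picking, for each $x$, a near-maximizer $v_x \in \VV_{\beta + \eta}$ of $O_{(\cdot)}(x; P_X)$. Pointwise strong consistency of the sample MAD, valid at any $v$ with $\MAD[\inner{X,v}] > 0$ and in particular on $\VV_{\beta+\eta}$, guarantees that eventually $v_x \in \widehat{\VV}_{\beta,n}$; the remainder $r_n$ then arises from the sample median and MAD deviations at such directions. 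The upper bound uses the complementary inclusion $\widehat{\VV}_{\beta,n} \subseteq \VV_{\beta - \eta}$, which comes from the same pointwise consistency combined with the continuity of $v \mapsto \MAD[\inner{X,v}]$ guaranteed by property~\ref{V2}, together with the boundedness of $\mathcal{F}$ (to control $\inner{x,v}$) and the lower bound $\MAD[\inner{X,v}] \geq \beta - \eta$ on this enlarged direction set (to control the denominators in $O_v$). With the sandwich in hand, the equicontinuity hypothesis \eqref{consAssump} lets me send $\eta \downarrow 0$ after $n \to \infty$ to conclude.

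The main obstacle is the uniform-in-$x$ control of
\[
\sup_{v \in \VV_{\beta - \eta}} \abs{O_v(x; \widehat{P}_n) - O_v(x; P_X)},
\]
since the direction set $\VV_{\beta - \eta}$ is not norm-compact in an infinite-dimensional Hilbert space, so a fully uniform Glivenko--Cantelli statement over $\SS$ cannot be expected. The crucial leverage is the uniform lower bound $\MAD[\inner{X,v}] \geq \beta - \eta$ on $\VV_{\beta-\eta}$: it prevents the empirical denominators $\widehat{\MAD}_n(v)$ from degenerating, so that pointwise deviations of the empirical median and MAD from their population counterparts translate into small deviations of $O_v$, made uniform in $x \in \mathcal{F}$ through the boundedness of $\mathcal{F}$ and the continuity properties from~\ref{V2}. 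This is precisely where the novelty highlighted before the theorem enters: the equicontinuity of $\Gamma_x$ in the regularization level $t$ replaces the usual compactness assumption on $\mathcal{F}$ in the argument $x$.
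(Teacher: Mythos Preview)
Your overall architecture---Lipschitz reduction, then a sandwich $\Gamma_x(\beta+\eta) - r_n \le \widehat S_n(x) \le \Gamma_x(\beta-\eta) + r_n$, then equicontinuity---matches the paper's. The gap is in how you justify the sandwich and the remainder: you repeatedly invoke \emph{pointwise} strong consistency of the sample median and MAD, but every step you need is actually a \emph{uniform-in-$v$} statement, and pointwise consistency does not deliver it.

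Concretely: (i) For the lower bound you choose a near-maximizer $v_x \in \VV_{\beta+\eta}$ depending on $x$, and claim that ``eventually $v_x \in \widehat\VV_{\beta,n}$''. The ``eventually'' threshold from pointwise consistency depends on $v_x$, hence on $x$, so this fails to be uniform over $\mathcal F$. (ii) The inclusion $\widehat\VV_{\beta,n} \subseteq \VV_{\beta-\eta}$ is the statement that $\widehat s(v) \ge \beta$ implies $s(v) \ge \beta-\eta$ for \emph{every} $v\in\SS$; continuity of $v\mapsto s(v)$ is irrelevant here---what you need is $\sup_{v\in\SS}|\widehat s(v)-s(v)|<\eta$. (iii) Your remainder $r_n$ is supposed to absorb the median/MAD errors at the directions $v_x$; since these vary with $x$, a single deterministic $r_n\to0$ again requires uniform control. (iv) The lower bound $\MAD[\inner{X,v}]\ge\beta-\eta$ keeps population denominators away from zero, but it says nothing about the \emph{empirical} denominators $\widehat s(v)$ being uniformly bounded below, which is what you actually use.

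The paper closes exactly this gap with a separate lemma: using weak compactness (and metrizability) of the closed unit ball in $\HH$, weak sequential continuity of $v\mapsto m(v)$ and $v\mapsto s(v)$ under contiguous support, and a uniform Prohorov-distance bound for projected empirical measures, it proves $\sup_{v\in\SS}|\widehat m(v)-m(v)|\to0$ and $\sup_{v\in\SS}|\widehat s(v)-s(v)|\to0$ almost surely. With $\delta_n=\sup_{v\in\SS}|\widehat s(v)-s(v)|$ one obtains the exact inclusions $\VV_{\beta+\delta_n}\subseteq\widehat\VV_{\beta,n}\subseteq\VV_{\beta-\delta_n}$, and the remainder term is controlled directly on $\widehat\VV_{\beta,n}$ without ever picking $v_x$. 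This uniform-over-$\SS$ result is the step your proposal is missing; without it, none of the sandwich bounds can be made to hold for all sufficiently large $n$ simultaneously over $\mathcal F$.
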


The key condition~\eqref{consAssump} reduces the problem of establishing finite sample consistency of \RPD{} to the task of verifying continuity of the deterministic map $\Gamma_x$. The following theorem shows that under fairly general conditions, one obtains pointwise consistency of the sample~\RPD{}.

\begin{theorem}\label{thm: consistencyPoint}

    Suppose that $X \sim P_X \in \P{\HH}$ has contiguous support and $\beta>0$ is chosen so that~$\VV_{\beta}\neq\emptyset$,
    and let $\mathcal{F}=\{x\}$ for some fixed $x \in \HH$. 
    Furthermore, assume that
    the maximal outlyingness at $x \in \HH$ is achieved over $\VV_\beta^+$, i.e.,
    \begin{equation}\label{pointwiseCons condition}
        \Gamma_x(\beta)=\sup_{v\in\VV_\beta}O_v(x;P_X)=\sup_{v\in\VV_\beta^+}O_v(x;P_X),
    \end{equation}
    where $\VV_\beta^+=\set{v\in\SS\st \MAD[\inner{X,v}]>\beta}$.
    Then, the condition~\eqref{consAssump} is satisfied, 
    implying that
    \begin{equation*}
        \abs{D_{\beta}(x; \widehat{P}_n) - D_{\beta}(x; P_X)} \xasto[n\to\infty] 0.
    \end{equation*}
\end{theorem}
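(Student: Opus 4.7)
My plan is to invoke Theorem~\ref{thm: consistency} with $\mathcal{F}=\{x\}$, whereupon condition~\eqref{consAssump} collapses to the mere continuity of $\Gamma_x$ at $t=\beta$. The map $t\mapsto \Gamma_x(t)$ is automatically non-increasing because $\VV_t$ shrinks as $t$ grows, so its one-sided limits at $\beta$ exist and satisfy $\lim_{t\downarrow\beta}\Gamma_x(t)\leq \Gamma_x(\beta)\leq \lim_{t\uparrow\beta}\Gamma_x(t)$. Right-continuity is essentially a reformulation of hypothesis~\eqref{pointwiseCons condition}: since $\bigcup_{t>\beta}\VV_t=\VV_\beta^+$, interchanging nested suprema gives
\[
\lim_{t\downarrow\beta}\Gamma_x(t)=\sup_{t>\beta}\sup_{v\in\VV_t}O_v(x;P_X)=\sup_{v\in\VV_\beta^+}O_v(x;P_X)=\Gamma_x(\beta).
\]

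For the left-continuity I plan to argue by contradiction, leveraging weak compactness. Suppose some $\delta>0$ and $t_n\uparrow\beta$ admit $v_n\in\VV_{t_n}$ with $O_{v_n}(x;P_X)\geq\Gamma_x(\beta)+\delta/2$. Since the unit ball of $\HH$ is weakly sequentially compact, extract $v_{n_k}\weakto v^\ast$ with $\norm{v^\ast}\leq 1$. Weak convergence yields $\inner{X,v_{n_k}}\to\inner{X,v^\ast}$ pointwise in $\omega$ (hence in distribution); the contiguous support assumption then promotes this to
\[
\med[\inner{X,v_{n_k}}]\to\med[\inner{X,v^\ast}],\qquad \MAD[\inner{X,v_{n_k}}]\to\MAD[\inner{X,v^\ast}],
\]
because the relevant quantiles are continuity points of the limiting distribution. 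Passing to the limit in $\MAD[\inner{X,v_{n_k}}]\geq t_{n_k}$ gives $\MAD[\inner{X,v^\ast}]\geq\beta>0$, which already forces $v^\ast\neq 0$. Exploiting the degree-zero homogeneity $O_{cv}=O_v$ for $c>0$, set $\tilde v=v^\ast/\norm{v^\ast}\in\SS$; then $\MAD[\inner{X,\tilde v}]=\MAD[\inner{X,v^\ast}]/\norm{v^\ast}\geq\beta$, i.e.\ $\tilde v\in\VV_\beta$, and
\[
O_{\tilde v}(x;P_X)=O_{v^\ast}(x;P_X)=\lim_k O_{v_{n_k}}(x;P_X)\geq\Gamma_x(\beta)+\delta/2,
\]
contradicting the definition of $\Gamma_x(\beta)$ as the supremum of $O_v(x;P_X)$ over $\VV_\beta$.

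The main obstacle is justifying weak sequential continuity of the maps $v\mapsto\med[\inner{X,v}]$ and $v\mapsto\MAD[\inner{X,v}]$: Theorem~\ref{thm: Vprop}\ref{V2} only asserts norm continuity, whereas Banach-Alaoglu supplies only weak convergence in infinite dimensions. Fortunately, weak convergence $v_{n_k}\weakto v^\ast$ is precisely what is needed to obtain pointwise a.s.\ convergence of the one-dimensional projections $\inner{X,v_{n_k}}$, and under contiguous support (which is preserved under positive scaling, covering the case $0<\norm{v^\ast}<1$) this is enough to transfer convergence to the median and MAD via standard quantile-convergence arguments, thereby closing the contradiction and establishing the required continuity of $\Gamma_x$ at $\beta$.
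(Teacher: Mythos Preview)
Your proposal is correct and follows essentially the same route as the paper: right-continuity of $\Gamma_x$ at $\beta$ via the identity $\bigcup_{t>\beta}\VV_t=\VV_\beta^+$ together with hypothesis~\eqref{pointwiseCons condition}, and left-continuity by a weak-compactness contradiction argument that extracts a weak limit $v^\ast$ in the closed unit ball, passes median and MAD through the limit using contiguous support, and then normalizes $v^\ast$ to land in $\VV_\beta$. The paper packages the weak sequential continuity of $v\mapsto m(v)$ and $v\mapsto s(v)$ into the proof of an auxiliary lemma, whereas you re-derive it inline; otherwise the arguments coincide.
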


Formula~\eqref{pointwiseCons condition} essentially states that the supremum defining $\Gamma_x(\beta)$ is not attained solely along directions with $\MAD[\inner{X,v}]=\beta$, but can instead be approximated by directions with $\MAD[\inner{X,v}]>\beta$. In other words, the boundary directions do not contribute to the supremum. As we show next, this condition is fairly mild.

For elliptically symmetric distributions with characteristic function in~\eqref{eq: ElliptProj}, we have (after a suitable reparametrization) $\MAD[\inner{X,v}]=\norm{\Sigma^{1/2}v}$; see~\eqref{eq: ElliptMAD}.  
Fix $0<\beta<\sup_{v\in\SS}\norm{\Sigma^{1/2}v}$. 
It is easy to see that the closure of the set $\set{v\in\SS \colon \norm{\Sigma^{1/2}v}>\beta}$ (with respect to the relative topology on $\SS$) coincides with $\set{v\in\SS \colon \norm{\Sigma^{1/2}v}\geq \beta}$.  
Together with the continuity of the mapping $v\mapsto O_v(x; P_X)$ (see part~\ref{V2} of Theorem~\ref{thm: Vprop}), this immediately implies that the condition~\eqref{pointwiseCons condition} holds.
Moreover, under ellipticity, one also obtains uniform consistency on bounded sets, as established in the following theorem.

\begin{theorem}\label{th: elliptConsistency}
Let $X \sim P_X \in \P{\HH}$ be elliptically symmetric with location~$\mu\in\HH$ and scatter operator $\Sigma$, defined via the characteristic function given in Equation~\eqref{eq: ElliptProj}, parametrized such that $\MAD[\inner{X,v}] = \norm{\Sigma^{1/2} v}$ (see~\eqref{eq: ElliptMAD} for details). Here, $\Sigma$ is a self-adjoint, positive definite, and compact operator on $\HH$.
Assume that $0 < \beta < \sqrt{\sigma_1}$, where $\sigma_1$ is the maximal eigenvalue of $\Sigma$.
Then, for any $M > 0$, the condition~\eqref{consAssump} is satisfied with 
$\mathcal{F} = \set{x \in \HH \st \norm{x} \leq M}$,
and therefore, 
the sample \RPD{} \eqref{eqRPDsample} is uniformly consistent for the population \RPD{} \eqref{eqRPD} over $\mathcal{F}$ as
\[
\sup_{\norm{x} \leq M} \abs{D_\beta(x; \widehat{P}_n) - D_\beta(x; P_X) } \xasto[n \to \infty] 0.
\]
\end{theorem}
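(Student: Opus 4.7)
The plan is to verify the equicontinuity condition~\eqref{consAssump} for $\mathcal{F}=\set{x\in\HH\st\norm{x}\le M}$ via a uniform H\"older-type estimate in $t$ at $\beta$, and then invoke Theorem~\ref{thm: consistency}. First I would exploit the elliptical structure: central symmetry of $P_X$ about $\mu$ yields $\med[\inner{X,v}]=\inner{\mu,v}$, and combined with the hypothesis $\MAD[\inner{X,v}]=\norm{\Sigma^{1/2}v}$, setting $y=x-\mu$ (so $\norm{y}\le M+\norm{\mu}=:M'$) and $f(v):=\norm{\Sigma^{1/2}v}$, the outlyingness reduces to $g_y(v):=\abs{\inner{y,v}}/f(v)$, while $\VV_t=\set{v\in\SS\st f(v)\ge t}$ by~\eqref{eq_DirSet_ECD}. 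Writing $\Gamma_y(t)=\sup_{v\in\VV_t}g_y(v)$, the task becomes showing $\sup_{\norm{y}\le M'}\abs{\Gamma_y(t)-\Gamma_y(\beta)}\to 0$ as $t\to\beta$.

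Two ingredients deliver this. The first is a uniform Lipschitz bound on $g_y$: splitting $g_y(v_1)-g_y(v_2)$ into numerator and denominator pieces and using $\abs{f(v_1)-f(v_2)}\le\sqrt{\sigma_1}\norm{v_1-v_2}$, a routine calculation yields a Lipschitz constant $L_0=(M'/\beta)\bigl(1+\sqrt{\sigma_1}/\beta\bigr)$ valid on $\set{v\in\SS\st f(v)\ge\beta/2}$ and independent of $y$. The second is a lifting map: for $v\in\SS$ with $f(v)<t$, take $e_v=\pm\phi_1$ with sign chosen so $\inner{v,e_v}\ge 0$ and set $\phi(t,v)=(v+\lambda e_v)/\norm{v+\lambda e_v}$ with $\lambda=\lambda(t,v)\ge 0$ the unique solution of $f(\phi(t,v))=t$ (setting $\phi(t,v)=v$ if $f(v)\ge t$). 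Using $\Sigma\phi_1=\sigma_1\phi_1$, this reduces to the quadratic $\lambda^2+2\abs{c_1}\lambda=A$ with $c_1=\inner{v,\phi_1}$ and $A=(t^2-f(v)^2)/(\sigma_1-t^2)$, whose root satisfies the algebraic identity $\lambda=A/(\sqrt{c_1^2+A}+\abs{c_1})\le\sqrt{A}$, and hence $\norm{\phi(t,v)-v}\le 2\lambda/(1-\lambda)$. Because $\beta<\sqrt{\sigma_1}$ keeps $\sigma_1-t^2$ bounded away from zero for $t$ in a neighborhood of $\beta$, and $t^2-f(v)^2\le 2\sqrt{\sigma_1}\abs{t-\beta}$ whenever $f(v)\ge\min(t,\beta)$, one obtains the uniform estimate $\norm{\phi(t,v)-v}\le C\sqrt{\abs{t-\beta}}$ with $C$ depending only on $\beta$ and $\sigma_1$.

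Combining the two ingredients handles both sides. For $t>\beta$, $\VV_t\subseteq\VV_\beta$ gives $\Gamma_y(t)\le\Gamma_y(\beta)$, while applying $\phi(t,\cdot)$ to any $v\in\VV_\beta$ yields $\Gamma_y(t)\ge g_y(\phi(t,v))\ge g_y(v)-L_0 C\sqrt{t-\beta}$; taking the sup over $v$ gives $\Gamma_y(\beta)-\Gamma_y(t)\le L_0 C\sqrt{t-\beta}$. For $t<\beta$ the mirror argument applies $\phi(\beta,\cdot)$ to a near-optimal $v_*\in\VV_t$, whose image lies in $\VV_\beta$, producing $\Gamma_y(t)-\Gamma_y(\beta)\le L_0 C\sqrt{\beta-t}$. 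Since all constants depend only on $M'$, $\beta$, and $\sigma_1$, this is uniform in $y$ with $\norm{y}\le M'$, verifying~\eqref{consAssump}, and Theorem~\ref{thm: consistency} then concludes. The main obstacle is calibrating the lift uniformly in $v$: a na\"ive construction along $\phi_1$ can blow up when $\abs{c_1}\to 1$ or $t\to\sqrt{\sigma_1}$, but the algebraic identity $\lambda\le\sqrt{A}$ absorbs the first concern, while the strict inequality $\beta<\sqrt{\sigma_1}$---precisely the standing hypothesis---absorbs the second.
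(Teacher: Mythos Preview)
Your proposal is correct and follows essentially the same strategy as the paper: both verify~\eqref{consAssump} by combining a uniform Lipschitz bound on $v\mapsto O_v(x;P_X)$ over $\{v:\norm{\Sigma^{1/2}v}\ge\beta/2\}$ with a controlled displacement of each direction toward the top eigenvector $\phi_1$ to pass between $\VV_t$ and $\VV_\beta$. The only cosmetic differences are that the paper moves along the great-circle arc from $v$ to $\phi_1$ and packages the displacement estimate as a Hausdorff-distance bound $d_H(\VV_t,\VV_\beta)\to 0$, whereas you use the normalized chord $(v+\lambda e_v)/\norm{v+\lambda e_v}$ and work directly; both constructions yield an $O(\sqrt{|t-\beta|})$-type bound and rely on $\beta<\sqrt{\sigma_1}$ in the same way.
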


\subsection{Robustness of induced median}\label{sec3.3}

Robustness analysis is a cornerstone in the study of depth-based functionals.
The median derived from a depth function is considered robust if small contaminations in the data distribution do not cause large changes in its value.
In our analysis, 
we measure robustness using a notion of the breakdown point, which quantifies
the largest proportion of contamination that can be introduced before the estimator becomes arbitrarily distorted. 
Although breakdown points have been extensively studied in Euclidean spaces \cite[e.g.,][]{Donoho1982,Zuo2003}, 
their investigation for functional depth medians has been comparatively limited.

\sloppy To explain the details, let $X \sim P_X \in \P{\HH}$ and $\beta > 0$ be such that
$\VV_\beta(P_X) = \set{v \in \SS \st \MAD[\inner{X,v}] \geq \beta} \neq \emptyset$. 
For $\varepsilon \in [0,1]$ and $Q \in \P{\HH}$, 
we define the $\varepsilon$-contaminated distribution 
$P_{(Q,\varepsilon)} \in \P{\HH}$ by $Q$ as
\begin{equation*}
    P_{(Q, \varepsilon)}(A) 
    = ((1-\varepsilon)P_X + \varepsilon Q)(A) 
    = (1-\varepsilon)P_X(A) + \varepsilon\, Q(A),
    \quad \text{for all Borel sets } A \subseteq \HH.
\end{equation*}
The corresponding contaminated direction set $\VV_\beta(Q,\varepsilon)$ is defined by
\begin{equation*}
    \VV_\beta(Q,\varepsilon)
    = \VV_\beta(P_{(Q, \varepsilon)})
    = \set{v \in \SS \st \MAD[\inner{X',v}] \geq \beta},
    \quad \text{where } X' \sim P_{(Q,\varepsilon)}.
\end{equation*}
To ensure that the contaminated \RPD{} $D_\beta(\cdot; P_{(Q, \varepsilon)})$ and the subsequent median $\theta(P_{(Q, \varepsilon)})$ remains well-defined, 
we only consider the set $\mathcal{Q}(\varepsilon)$ of contaminating distributions that yield a non-empty direction set defined by
\begin{equation*}
    \mathcal{Q}(\varepsilon)
    = \set{Q \in \P{\HH} \st \VV_\beta(Q,\varepsilon) \neq \emptyset}.
\end{equation*}
This is a natural condition, which is assumed throughout our paper. 
We then introduce the following definition 
of the breakdown point of the \RPD{} median $\theta = \theta(P_X)$ from~\eqref{eq: RPD median} tailored to our infinite-dimensional setting:
\begin{equation}\label{eq: BPdef}
    \varepsilon^*(\theta;P_X)
    = \inf\set{\varepsilon \in [0,1] \st 
        \sup_{Q \in \mathcal{Q}(\varepsilon)} 
        \sup_{v \in \VV_\beta(Q,\varepsilon)}
        \abs{\inner{\theta(P_{(Q,\varepsilon)}) - \theta(P_X), v}} = \infty}.\footnote{Since we assume that both $\VV_\beta(P_X)\neq\emptyset$ and $\VV_\beta(Q,\varepsilon)\neq\emptyset$, the corresponding sets of medians $M(P_X)$ and $M(P_{(Q,\varepsilon)})$ are non-empty by Theorem~\ref{thm: medianExist}. One must decide how to choose the representatives $\theta(P_X)$ and $\theta(P_{(Q,\varepsilon)})$; however, our results are independent of this choice.}
\end{equation}

The next theorem shows that the breakdown point $ \varepsilon^*(\theta;P_X)$ in \eqref{eq: BPdef} of the \RPD{} median $\theta(P_X)$ in \eqref{eq: RPD median} is $1/2$, 
which is the highest possible value among reasonable estimators. 
This indicates that the \RPD{} median possesses good robustness properties, which is consistent with the high breakdown point of the classical multivariate projection depth~\citep[Theorem~3.4]{Zuo2003}. 

\begin{theorem}\label{thm: BP}
    Let $P_X \in \P{\HH}$ and $\beta>0$ be such that $\VV_\beta(P_X)\neq\emptyset$. Then $ \varepsilon^*(\theta;P_X) =1/2$.
\end{theorem}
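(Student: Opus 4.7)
The theorem claims $\varepsilon^*(\theta; P_X) = 1/2$, so I would prove both inequalities $\varepsilon^* \leq 1/2$ and $\varepsilon^* \geq 1/2$ separately.

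\textbf{Upper bound $\varepsilon^* \leq 1/2$.} Fix any $\varepsilon \in (1/2, 1)$, pick $v_0 \in \VV_\beta(P_X)$, and introduce the two-point contamination
\[
    Q_t = \tfrac{1}{2}\delta_{(t + \beta/2) v_0} + \tfrac{1}{2}\delta_{(t - \beta/2) v_0}, \qquad t > 0.
\]
A direct CDF computation for $X' \sim P_{(Q_t, \varepsilon)}$ shows that, for every $v \in \SS$ and every $t$ sufficiently large,
\[
    \med[\inner{X', v}] = (t - \beta/2)\,\inner{v_0, v}, \qquad \MAD[\inner{X', v}] = \beta\,\abs{\inner{v_0, v}};
\]
the identity for the MAD uses $\varepsilon > 1/2$ to locate the median of the absolute deviations at the second-smallest among the three atoms. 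Hence $\VV_\beta(Q_t, \varepsilon) = \{\pm v_0\}$, and property~\ref{N4} forces the \RPD{} median into $\spn{v_0}$, on which the depth reduces to the explicit function $c \mapsto (1 + \abs{c - (t - \beta/2)}/\beta)^{-1}$. Therefore $\theta(P_{(Q_t, \varepsilon)}) = (t - \beta/2)\, v_0$, and $\abs{\inner{\theta(P_{(Q_t, \varepsilon)}) - \theta(P_X), v_0}} \to \infty$ as $t \to \infty$. Since this holds for every $\varepsilon > 1/2$, we conclude $\varepsilon^*(\theta; P_X) \leq 1/2$.

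\textbf{Lower bound $\varepsilon^* \geq 1/2$.} Fix $\varepsilon \in [0, 1/2)$, any $Q \in \mathcal{Q}(\varepsilon)$, and any $v \in \VV_\beta(Q, \varepsilon)$. The standard quantile inequality for $\varepsilon$-contaminated mixtures yields
\[
    q_{\tau_1}(\inner{X, v}) \leq \med[\inner{X', v}] \leq q_{\tau_2}(\inner{X, v}),
\]
with $\tau_1 = (1 - 2\varepsilon)/(2(1-\varepsilon)) > 0$ and $\tau_2 = 1/(2(1-\varepsilon)) < 1$. Tightness of $P_X$ on the separable Hilbert space $\HH$ provides a compact set $K \subset \HH$ with $P_X(K) > 1 - \min(\tau_1, 1 - \tau_2)$, whence $\abs{q_{\tau_i}(\inner{X, v})} \leq R := \sup_{x \in K} \norm{x}$ uniformly in $v \in \SS$, giving $\abs{\med[\inner{X', v}]} \leq R$ uniformly in $v$ and $Q$. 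A parallel mixture estimate, using the bound just established on the median, delivers $\MAD[\inner{X', v}] \leq 2R$ uniformly. Combining these with $\MAD[\inner{X', v}] \geq \beta$,
\[
    O_v(\theta(P_X); P_{(Q, \varepsilon)}) \leq \frac{\norm{\theta(P_X)} + R}{\beta} =: C,
\]
so $D_\beta(\theta(P_X); P_{(Q, \varepsilon)}) \geq (1+C)^{-1}$. Any maximizer $\theta' := \theta(P_{(Q, \varepsilon)})$ inherits this lower bound on depth, yielding $O_v(\theta'; P_{(Q, \varepsilon)}) \leq C$ for every $v \in \VV_\beta(Q, \varepsilon)$, which rearranges to
\[
    \abs{\inner{\theta' - \theta(P_X), v}} \leq \abs{\med[\inner{X', v}]} + C\,\MAD[\inner{X', v}] + \norm{\theta(P_X)} \leq R + 2CR + \norm{\theta(P_X)},
\]
a bound independent of $Q$ and $v$. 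Hence the supremum in~\eqref{eq: BPdef} is finite, giving $\varepsilon^* > \varepsilon$; as $\varepsilon < 1/2$ is arbitrary, $\varepsilon^* \geq 1/2$.

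\textbf{Main obstacle.} The technical heart is the uniform-in-$v \in \SS$ control of $\med[\inner{X', v}]$ and $\MAD[\inner{X', v}]$ under arbitrary $\varepsilon$-contamination. The mixture quantile inequality provides only pointwise-in-$v$ bounds, and upgrading them to uniformity over the entire unit sphere requires tightness of $P_X$ through a compact set of high probability; the restriction $\varepsilon < 1/2$ is essential, as otherwise the mixing quantiles $\tau_1, \tau_2$ collapse to $0$ and $1$ and the quantile bounds blow up. By contrast, the upper bound is essentially mechanical once the two-point contamination $Q_t$ is shown to shrink $\VV_\beta(Q_t, \varepsilon)$ to $\{\pm v_0\}$, reducing the infinite-dimensional maximization to the one-dimensional computation on $\spn{v_0}$ enabled by property~\ref{N4}.
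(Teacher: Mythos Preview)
Your lower bound $\varepsilon^* \geq 1/2$ follows the paper's strategy: bound $\abs{\med[\inner{X',v}]}$ and $\MAD[\inner{X',v}]$ uniformly in $v$ and $Q$, then compare the outlyingness of a fixed reference point with that of the contaminated median. The paper uses $0$ as reference and the cleaner bound $\med[\inner{X',v}] \leq \med[\norm{X'}] \leq q$, where $q$ is the $(1/2+\varepsilon)$-quantile of $\norm{X}$; your tightness-and-quantile route reaches the same conclusion with more bookkeeping, but is correct.

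For the upper bound $\varepsilon^* \leq 1/2$ the paper takes a shorter route: it notes that the \RPD{} median is translation equivariant and invokes the standard argument that no translation-equivariant estimator can have breakdown point above $1/2$. Your explicit two-point contamination is more constructive, but contains an imprecision: the displayed formulas $\med[\inner{X',v}] = (t-\beta/2)\inner{v_0,v}$ and $\MAD[\inner{X',v}] = \beta\abs{\inner{v_0,v}}$ do \emph{not} hold for every $v \in \SS$ for a single large $t$; when $\abs{\inner{v_0,v}}$ is small, the clean-data part interferes and the median need not sit at the lower atom. The key conclusion $\VV_\beta(Q_t,\varepsilon) = \{\pm v_0\}$ is nevertheless correct, via the simpler observation that for $\varepsilon > 1/2$ more than half the mass of $\inner{X',v}$ lies on two atoms $\beta\abs{\inner{v_0,v}}$ apart, so the median lies between them and $\MAD[\inner{X',v}] \leq \beta\abs{\inner{v_0,v}} < \beta$ whenever $v \neq \pm v_0$. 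With this fix your argument goes through.
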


While the previous theorem guarantees exceptional robustness properties for the \RPD{} median, it is important to note that $\varepsilon^*(\theta; P_X)$ is a weaker notion than the usual breakdown point, in which the estimator 
breaks down when the difference $\norm{\theta(P_{(Q,\varepsilon)}) - \theta(P_X)}$ in norm can be made arbitrarily large~\citep[Section~1.4]{Huber_Ronchetti2009}. 
In infinite-dimensional spaces, establishing the classical norm-based breakdown point can be technically challenging, 
and this issue is rarely addressed directly in the literature. 
Our formulation~\eqref{eq: BPdef} of breakdown point captures the idea that the estimator breaks down once it can be pushed 
arbitrarily far in some direction $v \in \VV_\beta(Q,\varepsilon)$. 
This is consistent with the weak topology in the space $\Clo{\spn{\VV_\beta(Q,\varepsilon)}}$ that we consider throughout the paper. 
Further, in Theorem~\ref{thm: BP}, we do not consider the situation in which the median breaks down in the sense that 
$\MAD[\inner{X',v}] < \beta$ for all $v \in \SS$, that is, when the set $\VV_\beta(Q,\varepsilon)$ 
becomes empty. While one could, in principle, extend the breakdown analysis to include such degeneracies, we do not address that here due to space constraints.

\section{An application: Outlier detection} \label{sec4}

The goal of the present paper is to introduce the projection depth for functional data and develop its theoretical foundations. The more practical issues like the computation of $\RPD{}$, the choice of the tuning parameter $\beta$, and a comprehensive set of applications of \RPD{} to statistical inference, will be part of a separate article. Here, we limit ourselves to a few remarks about the computation of \RPD{} and a single selected application to an outlier detection problem, which allows us to compare \RPD{} with the regularized halfspace depth from \citet{YDL25RHD} and other classical functional depths, and showcase some of the strengths of the new proposal.

\subsection{Practical implementation and tuning}

The~\RPD{} in~\eqref{eqRPD} is naturally a conditional version of the classical projection depth \citep{Zuo2003} from~\eqref{eqFPD} applied in a Hilbert space $\HH$. For that reason, some of the elaborate computational procedures for the projection depth are also applicable in the setting of $\HH$-valued data, see \citet{Liu_Zuo2014b}, \citet{Shao_etal2022}, and the references therein.

In our implementation, we followed a simple random approximation of the \RPD{} proposed by \citet[Section~6]{Dyckerhoff2004}, where the infimum in~\eqref{eqRPD} over $v \in \VV_{\beta}$ is replaced by a minimum over a (large number) $M \geq 1$ of directions $v_1, \dots, v_M \in \VV_{\beta}$ sampled independently from a distribution $Q_V$ supported on the set $\VV_{\beta}$. We obtain the random \RPD{}
    \begin{equation} \label{eq: random RPD}
    D_{\beta}^{(M)}(x; P_X) = \min_{m=1,\dots,M} \left(1 + O_{v_m}(x; P_X) \right)^{-1}.
    \end{equation}
A straightforward adaptation of \citet[Proposition~11]{Dyckerhoff2004} then gives that, as long as $Q_V$ does not vanish on $\VV_{\beta}$, we have a strong convergence result $\lim_{M \to \infty} D_{\beta}^{(M)}(x; P_X) = D_{\beta}(x; P_X)$ a.s., for any $x$ and $P_X$ in $\HH$. Due to the space constraints, the complete proof of this result and the technical conditions on $Q_V$ will be detailed in our follow-up study of the practical properties of \RPD{}.

We have implemented the approximated depth~\eqref{eq: random RPD} efficiently in \textsf{C++}, with an interface in the software environment \textsf{R} via package \textsf{RcppArmadillo} \citep{RcppArmadillo}. The resulting package \textsf{RPD} is available online.\footnote{\url{https://github.com/NagyStanislav/RPD}}
Numerical experiments confirm that for data of low-to-intermediate effective dimension (that is, the dimension of the ``signal'' in the functional data, as opposed to the typically infinite-dimensional ``noise''), our approximate procedure gives stable and reliable results (we use $M = 10\,000$ in~\eqref{eq: random RPD} by default).

Recall that the positiveness of $\beta > 0$ ensures the non-degeneracy of \RPD{}, and thus establishes theoretical guarantees on the performance of \RPD{}.
In our implementation, we set $\beta$ to be the $u$-quantile of $\MAD[\inner{X,V}]$,
where $V$ is independent of $X$ and follows a distribution supported in~$\SS$.\footnote{In our numerical study in Section~\ref{sec:4.2}, we discretize all functional data onto a grid of $101$ points in their domain; the random variable $V$ is then taken uniform on the unit sphere in $\R^{101}$.} 
This ensures that $\VV_{\beta}$ is non-empty.
The practical choice of parameter $u$ follows the general strategies outlined by \citet[Section~2.6]{YDL25RHD}: the \RPD{} with smaller quantile level $u \in (0, 1)$ guarantees sensitivity to various functional outliers and is expected to generally perform well. Overall, we recommend a relatively small (but positive) value $u$, which yields good performance in subsequent numerical studies.

\subsection{Proof of concept: Detection of shape outliers} \label{sec:4.2}

The performance of the \RPD{} is demonstrated in a simple task of detecting shape outlying functional data. We consider five competitors in the Hilbert space $\HH$ of square-integrable real-valued functions on $[0,1]$: 
    \begin{itemize}
    \item \RPD{}: the newly proposed regularized projection depth;
    \item \RHD{}: the regularized halfspace depth from \citet{YDL25RHD} with the default choice of the dimension $J$ (for sample size $n = 500$ used in our simulations, the default choice is $J = \left\lfloor n^{1/7.1} \right\rfloor = 2$);
    \item \RHD{}$_6$: the depth \RHD{} with a relatively high and fixed dimension $J = 6$; 
    \item \FD{}: the integrated halfspace depth \citep{Fraiman_Muniz2001, Nagy_etal2016}
        \[
        FD(x; P_X) = \int_0^1 HD(x(t); P_{X( t)}) \dd t,
        \]
    where $HD(u; Q) = \min\left\{ Q((-\infty, u]), Q([u, \infty)) \right\}$ is the one-dimensional halfspace depth of $u \in \R$ w.r.t. $Q \in \P{\R}$, and $P_{X(t)} \in \P{\R}$ is the one-dimensional distribution of $X(t)$, $t \in [0,1]$. The depth $FD$ is known to behave quite similarly to, e.g., the widely used modified band depth \citep{Lopez_Romo2009}, and
    \item \ID{}: the infimal halfspace depth \citep{Mosler2013}
        \[
        ID(x; P_X) = \inf_{t \in [0,1]} HD(x(t); P_{X(t)}),
        \]
    as a representative of the infimal/extremal depths from the literature \citep{Narisetty_Nair2016}.
    \end{itemize}
For the precise definition of \RHD{} we refer to \cite{YDL25RHD}. Using the $(1-u)$-quantile for the \RHD{} (to ensure consistency with the regularization for \RPD{}), the regularized depths \RPD{} and \RHD{} are implemented with three choices of quantile levels $u \in \{0.01, 0.05, 0.1\}$, corresponding to increasingly strict regularization. The functional data is discretized into $101$ equi-spaced points in the domain $[0,1]$.

The data are generated in a $J_{\textrm{true}}$-dimensional basis expansion with $J_{\textrm{true}} = 6$ and the basis of orthogonal polynomials $\phi_1, \dots, \phi_6$ on $[0,1]$ of maximum degree $J_{\textrm{true}} - 1 = 5$ ($\phi_1$ is the constant function of degree $0$, etc.). The clean data are a random sample $X_1, \dots, X_n$ of size $n = 500$ generated from a model $X = \sum_{j=1}^6 c_j \phi_j$ with $c = (c_1, \dots, c_6)\tr$ a centered Gaussian random vector. The covariance matrix $\Sigma \in \R^{6 \times 6}$ of $c$ has a unit diagonal and all off-diagonal terms $0.95$. The sample $X_1, \dots, X_{500}$ is contaminated by independent curves $Y_1, \dots, Y_{50}$ distributed as $Y = \sum_{j=1}^6 d_j \phi_j$, where $d = (d_1, \dots, d_6)\tr$ is also Gaussian, but is centered around $(1, \dots,1)\tr \in \R^6$ and with covariance matrix $\Sigma^{-1}/100$. This configuration ensures that the coefficient vectors $c$ are well separated from most of those in $d$ in $\R^6$. The cloud of vectors $d$ is, however, largely contained in the axes-aligned box where most of the realizations from $c$ are located, violating mainly their covariance structure. This results in $X$ and $Y$ being of different shapes, but this difference is not immediately clear to discern from their plots, see Figure~\ref{fig: outl funs}.

\begin{figure}
    \centering
    \includegraphics[width=0.45\linewidth]{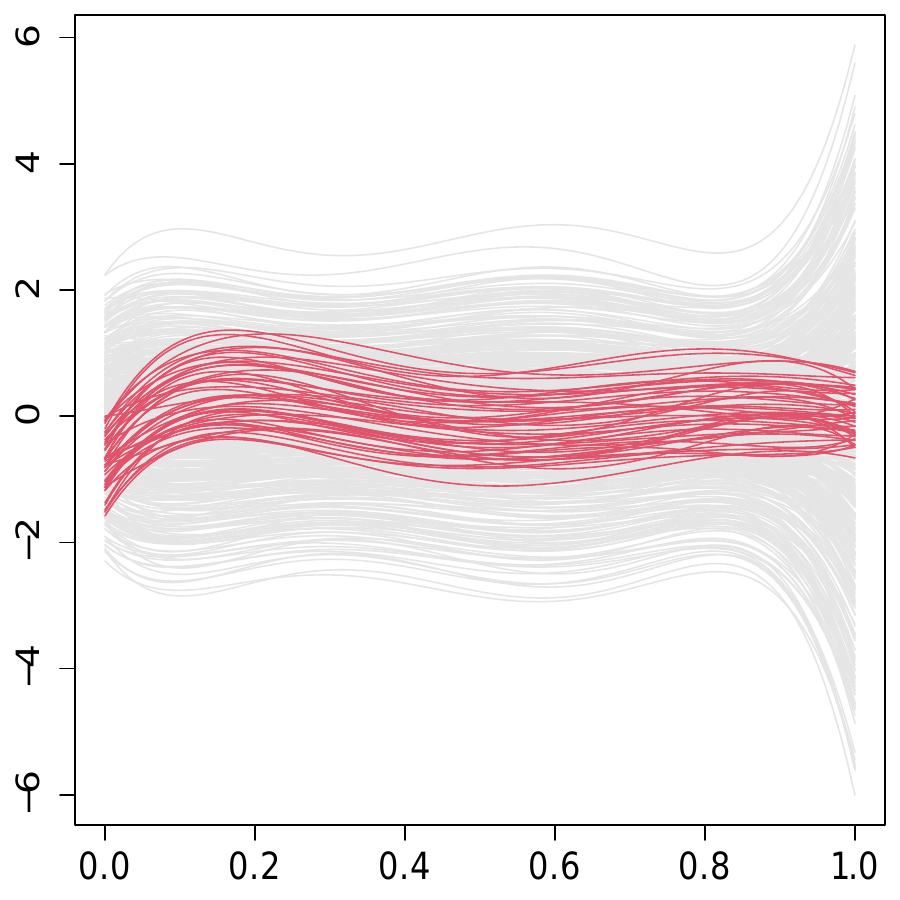}
    \includegraphics[width=0.45\linewidth]{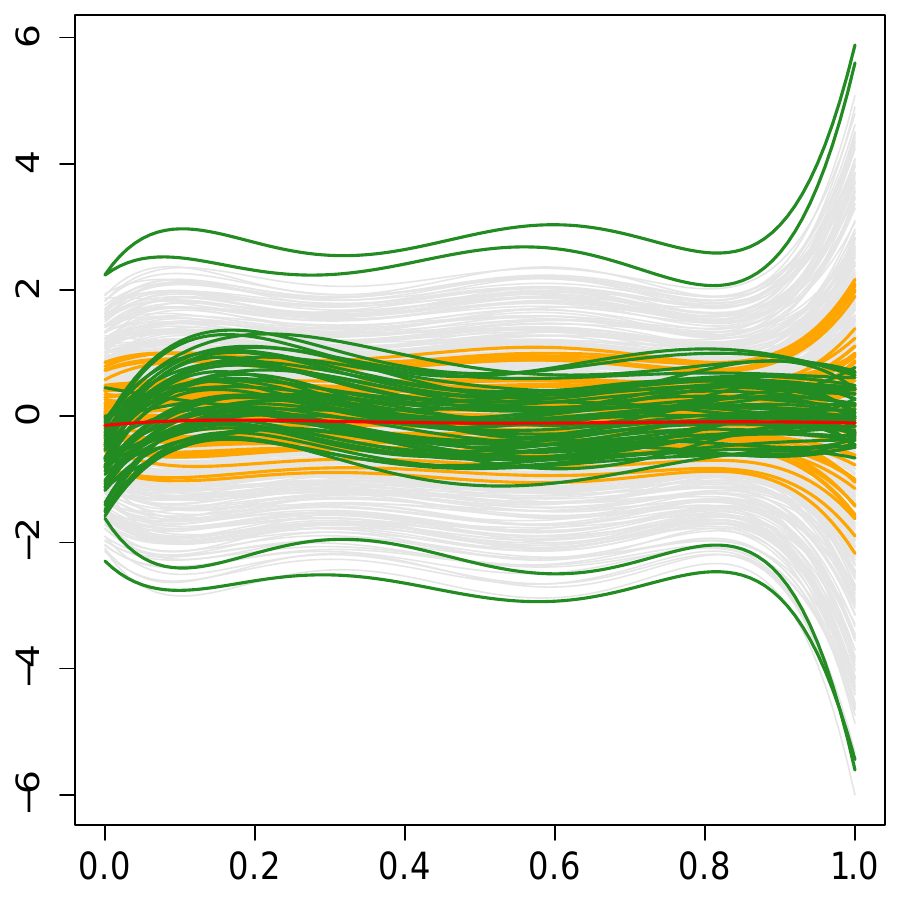}
    \includegraphics[width=0.45\linewidth]{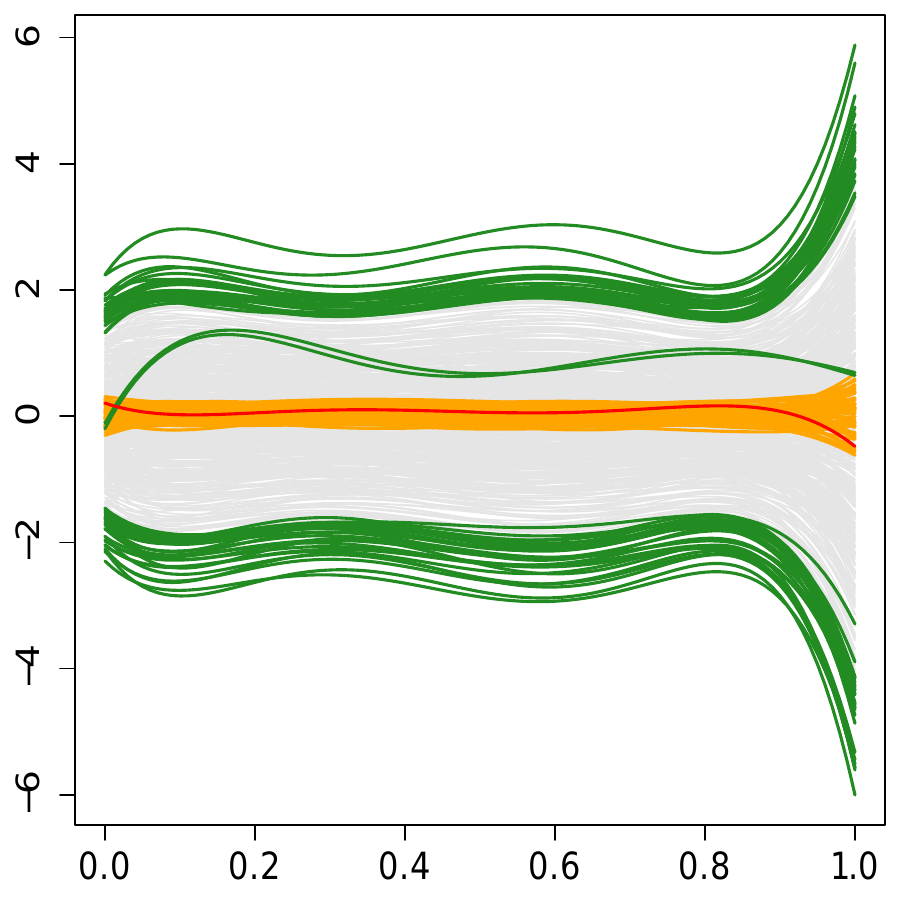}  
    \includegraphics[width=0.45\linewidth]{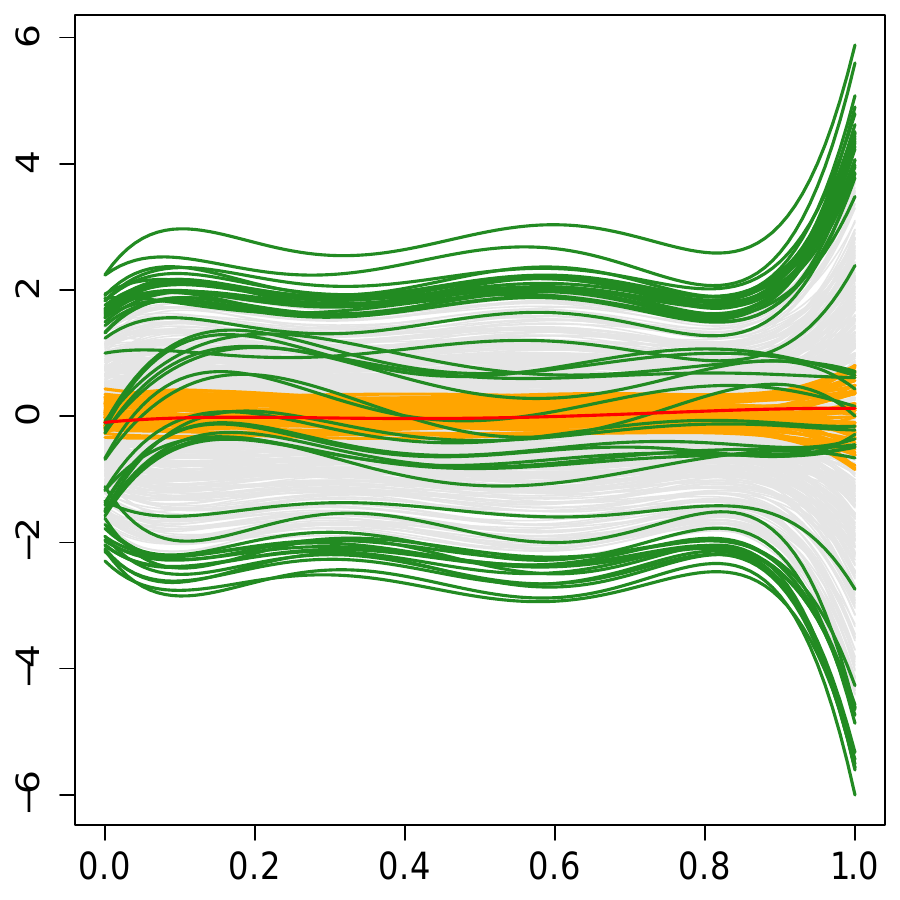}    
    \caption{Outlier detection: A single random sample of clean functional data (gray) of size $n = 500$ with the contaminating sample (red) of size $m = 50$ (top left). The same sample with the sample median curve in red, $10\%$ of the deepest observations in orange, and $10\%$ of the least deep observations in green, using \RPD{} (top right), \RHD{} (bottom left), and \RHD{}$_6$ (bottom right) with $u = 0.01$.}
    \label{fig: outl funs}
\end{figure}

To identify the outliers $Y_i$, $i = 1, \dots, 50$, functional depths of the pooled sample of $500 + 50 = 550$ functions were computed. The functions were ranked according to their depth (rank $1/550 \approx 0$ for the least deep function, and rank $550/550 = 1$ for the deepest curve). The mean rank of the $50$ outliers was evaluated for each functional depth as an indicator of outlier identification performance. This experiment was repeated in $1\,000$ independent Monte Carlo runs, and the means and standard deviations (in brackets) of the resulting mean ranks of outliers are presented in Table~\ref{tab: outl1}. 

\begin{table}[ht]
\centering
\caption{Means and standard deviations (in brackets) of the average rank of outlying curves. The three mean ranks closest to $25/550\approx 0.045$ are highlighted in bold.} 
\label{tab: outl1}
\resizebox{\textwidth}{!}{\begin{tabular}{c|ccc|cc}

$u$ & \RPD & \RHD & \RHD$_6$ & \FD & \ID\\ 
   \hline
$0.01$ & \textbf{0.047} \,(0.001) & 0.469 \,(0.087) & 0.222 \,(0.085) &                  &                  \\ 
  $0.05$ & \textbf{0.048} \,(0.002) & 0.678 \,(0.028) & 0.364 \,(0.173) & 0.703 \,(0.023) & 0.409 \,(0.027) \\ 
  $0.1$  & \textbf{0.058} \,(0.008) & 0.710 \,(0.025) & 0.664 \,(0.037) &                  &                  \\ 

\end{tabular} }
\end{table}

The \RPD{} clearly outperforms all the other depths, achieving the mean rank very close to the minimum possible value $25/550$, 
meaning that nearly all the $50$ outliers were indeed designated among the $50$ least deep data. On the other hand, \RHD{} fails to identify the outliers if the dimension is chosen in a data-driven way, and performs slightly better when the effective dimension of the data is used in \RHD{}$_6$. Still, however, the gap in performance compared with \RPD{} is substantial. 
As expected, due to the pure shape outlyingness of $Y$'s, the integrated depth \FD{} completely fails in this task, 
and based on their rather ``central'' location within the cloud of curves, the \FD{} flags the $Y$'s as very deep in the sample. Slightly better results are obtained using \ID{}, though still quite behind \RPD{}. 

In Figures~\ref{fig: outl funs} and~\ref{fig: outl scatter} are diagnostic plots from one data generation to explain the clear distinction of \RPD{} from \RHD{}. First, we see from Figure~\ref{fig: outl funs} that \RPD{} (top right panel) correctly identifies the shape outliers as curves of low depth.
In contrast, \RHD{} (bottom left panel) attaches more importance to location (horizontal shift) than severe differences in shape, because the dimensionality of the data is reduced too much (the default space into which the functions project is two-dimensional). The depth \RHD{}$_6$, using the information about the true dimensionality of the data, does identify some outliers correctly, but struggles with false positives. The \RPD{} does a good job identifying the shape outliers, especially with low regularization $u$ (though \RPD{} performs quite well across all considered choices of $u$). As a final diagnostic plot, in Figure~\ref{fig: outl scatter}, we see the scatter plot of \RPD{} against \RHD{}$_6$ for each sample curve in a single run of our simulation (with varying~$u$). 
The \RPD{} clearly separates outliers (brightly colored points) from the rest of the data, while for \RHD{}$_6$, even with the correctly identified dimension $J = J_{\mathrm{true}} = 6$ and the highest amount of regularization ($u = 0.01$), many of the non-outlying functions attain depths lower than the outliers.

The limited effectiveness of \RHD{} in functional outlier detection, which at times performs worse even than classical depths such as \FD{} and \ID{}, is attributed to the discrete nature of \RHD{}. This results in tied rankings and restricts its ability to distinguish multiple outliers. From this perspective, the \RPD{} offers substantial practical advantages by enabling outlier detection (and inference) directly via \RPD{} values, without the need for additional tie-breaking/tuning procedures.

\begin{figure}[ht!]
    \centering
    \includegraphics[width=0.45\linewidth]{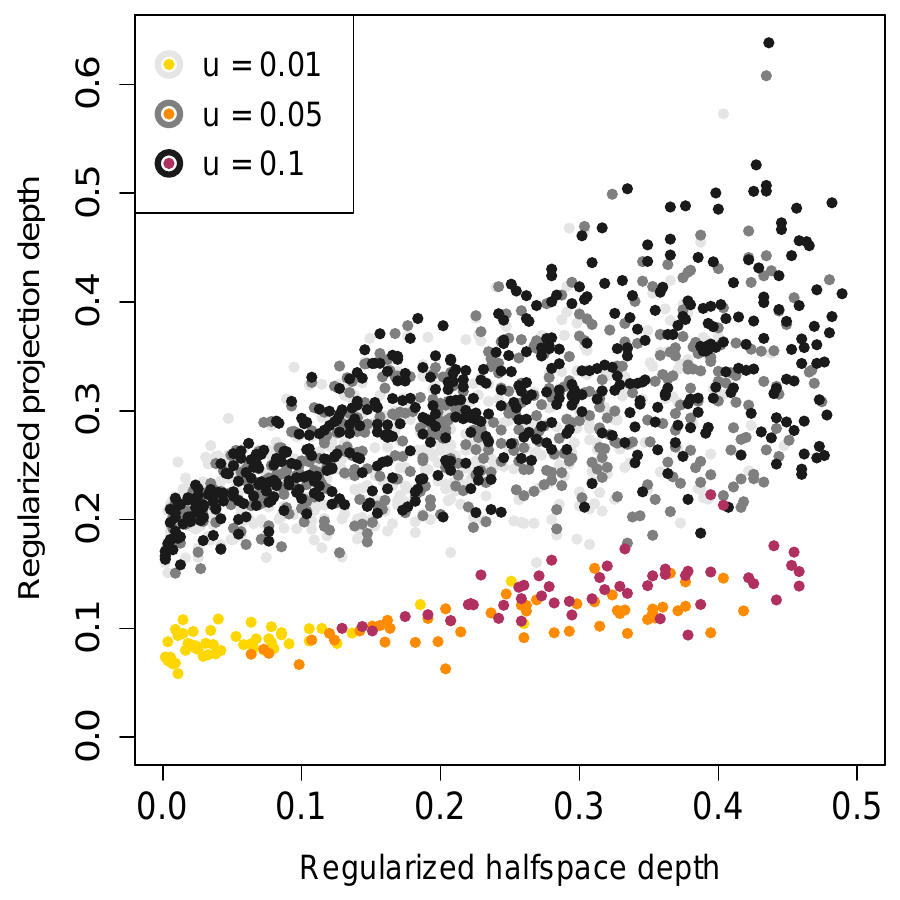}
    \caption{Outlier detection: A scatter plot of \RHD$_6$ (horizontal axis) against \RPD{} (vertical axis) for clean observations (gray) and outliers (gold/orange/violet). Color coding distinguishes the value of the quantile level $u$ (the regularization parameter). \RPD{} correctly flags the outliers as having low depth value, which is not possible using \RHD{}$_6$. This is consistent across the three choices of $u$.}
    \label{fig: outl scatter}
\end{figure}

\section{Conclusion}
The novel~\RPD{} introduced here offers several key advantages over existing functional depth notions.
Thanks to its grounding in Hilbert space and its definition based on projections,
the~\RPD{} is sensitive to shape-related features in functional data.
This makes it particularly effective for tasks like functional outlier detection; see Section~\ref{sec4}.
In contrast, depth functions developed in Banach spaces, such as the integrated~\citep{Fraiman_Muniz2001, Nagy_etal2016} or infimal~\citep{Mosler2013, Narisetty_Nair2016} depth, often struggle with identifying shape outliers.
The~\RPD{} also overcomes some drawbacks of the recently introduced regularized halfspace depth \citep{YDL25RHD}, such as sample depth discreteness and zero values outside the convex hull of the data, which hinder the latter depth in outlier detection by assigning tied values to multiple outliers~\cite[see][Section~3]{YDL25RHD}.
In addition, unlike the Mahalanobis~\citep{BBC20}, $L^2$~\citep{RC24}, or regularized halfspace~\citep{YDL25RHD} depths, the~\RPD{} does not require any moment assumptions on the data,
leading to its natural and strong robustness.
The \RPD{} also satisfies several desirable theoretical properties not shared by other depths; see Section~\ref{sec3}.
Finally, well-known advantages of projection depth in finite dimensions~\citep[e.g.,][]{Zuo2003, Zou06} naturally carry over to the infinite-dimensional~\RPD{} setting.

\appendix

\newpage
\section{Proofs for Sections~1 and~2}

\begin{proof}[\textbf{Proof of Theorem~\ref{thmFPDdege}}]
    Let $\Sigma$ be the positive definite covariance operator of $X$, and assume w.l.o.g. that $\E[X] = 0$. Since projections of a Gaussian random element in $\HH$ are Gaussian, we have for every $v \in \SS$
    \[
        \med[\inner{X, v}] = 0, \quad
        \MAD[\inner{X, v}] = b\, \SD[\inner{X, v}] = b \norm{\Sigma^{1/2}v},
    \]
    where $\SD[\inner{X, v}]$ is the standard deviation of $\inner{X, v}$, and $b = \Phi^{-1}(3/4)$ is the $3/4$-quantile of the standard Gaussian random variable; see~\eqref{eq: ElliptMAD}. The projection depth~\eqref{eqFPD} thus reduces to
    \[
        D(x; P_X) = \inf_{v \in \SS} \left( 1 + b^{-1} \abs{g_v(x)} \right)^{-1},
        \quad \text{where } g_v(x) = \inner{x, v} / \norm{\Sigma^{1/2}v}.
    \]
    Let $\set{\sigma_i}_{i=1}^\infty$ be the non-increasing sequence of positive eigenvalues of $\Sigma$ and let $\set{\phi_i}_{i=1}^\infty$ denote the corresponding eigenvectors. For each $i \in \N$, define $\xi_i = g_{\phi_i}(X_0) = \sigma_i^{-1/2} \inner{X_0, \phi_i}$, so that $\set{\xi_i}_{i=1}^\infty$ is a sequence of i.i.d. standard Gaussian random variables. We claim that
    \begin{align} \label{eqFPDthm1}
        \sup_{i \in \N} \abs{\xi_i} = \infty \quad \text{a.s.}
    \end{align}
    Assuming this, we obtain
    \[
        D(X_0; P_X) \leq \inf_{i \in \N} \left( 1 + b^{-1} \abs{\xi_i} \right)^{-1}
        = \left( 1 + b^{-1} \sup_{i \in \N} \abs{\xi_i} \right)^{-1} = 0 \quad \text{a.s.}
    \]
    It remains to prove~\eqref{eqFPDthm1}. For any $M > 0$ and $J \in \N$,
    \[
        \pr \left( \sup_{i \in \N} \abs{\xi_i} \leq M \right)
        \leq \pr \left( \sup_{1 \leq i \leq J} \abs{\xi_i} \leq M \right)
        = \left\{ 2\Phi(M) - 1 \right\}^J \xto[J\to\infty] 0
    \]
    for $\Phi$ the distribution function of a standard Gaussian random variable. Hence,
    \[
        \pr \left( \sup_{i \in \N} \abs{\xi_i} < \infty \right)
        = \lim_{M \to \infty} \pr \left( \sup_{i \in \N} \abs{\xi_i} \leq M \right) = 0,
    \]
    proving~\eqref{eqFPDthm1} and completing the proof.
\end{proof}

\begin{proof}[\textbf{Proof of Theorem~\ref{thm: nonDeg}}]
    Fix $x \in \HH$. For any $v \in \VV_\beta$ we have
    \begin{align*}
        O_v(x;P_X) 
        &= \frac{\abs{\inner{x, v} - \med[\inner{X,v}]}}{\MAD[\inner{X,v}]}
        \leq \beta^{-1} (\norm{x} + \abs{\med[\inner{X,v}]}).
    \end{align*}
    Since $\inner{X,v}\leq\norm{X}$ and $\abs{\inner{X,v}-\med[\inner{X,v}]}\leq \norm{X}+\abs{\med[\inner{X,v}]}$, we can bound
    \begin{equation}\label{medMADinequalities}
        \med[\inner{X,v}] \leq \med[\norm{X}], \quad \MAD[\inner{X,v}] \leq 2\,\med[\norm{X}].
    \end{equation}
    This implies $\sup_{v \in \VV_\beta} O_v(x;P_X) 
        \leq \beta^{-1} (\norm{x} + \med[\norm{X}])$, or equivalently
    \[
        D_\beta(x;P_X) 
        = \left(1 + \sup_{v\in\VV_\beta} O_v(x;P_X) \right)^{-1}
        \geq \left(1 + \beta^{-1} (\norm{x} + \med[\norm{X}]) \right)^{-1} > 0.
    \]
\end{proof}

\section{Proofs for Section~3.1}

\begin{proof}[\textbf{Proof of Theorem~\ref{thm: Vprop}}]\mbox{}
\begin{enumerate}[label=\upshape{(V\arabic*)}]
    \item This follows from $\norm{-v} = \norm{v}$ and $\MAD[\inner{X, -v}] = \MAD[\inner{X, v}]$.

    \item Let $v_k \xto[k\to\infty] v$ in $\HH$. Then $\inner{X, v_k} \xdistto[k\to\infty] \inner{X, v}$ by the continuous mapping theorem~\citep[Theorem~9.3.7]{Dudley2002}. By~\citet[Theorem~3.6 and Example~3.1]{Huber_Ronchetti2009}, the median is continuous under weak convergence if the limit distribution has connected support, which holds for $\inner{X, v}$ by the assumption on $P_X$. Hence, $\med[\inner{X, v_k}] \xto[k\to\infty] \med[\inner{X, v}]$. Since the mapping $(Q, v) \mapsto \abs{\inner{Y,v} - \med[\inner{Y,v}]}$ for $Y\sim Q$ is continuous at $(P_X,v)$ and the distribution of $X$ has connected support, the same argument applies to $\MAD[\inner{X, v_k}] \xto[k\to\infty] \MAD[\inner{X, v}]$.

     \item 
     If $\beta$ is chosen as the $u$-quantile of $\MAD[\inner{X,V}]$ with $u < 1$, then $P_V(\VV_\beta)\geq 1-u>0$, which implies that $\VV_\beta \neq \emptyset$.
\end{enumerate}
\end{proof}

\begin{proof}[\textbf{Proof of Theorem~\ref{thm: RPDPropInh}}]\mbox{}
\begin{enumerate}[label=\upshape{(P\arabic*)}]
    \item For $\tau \leq 0$, trivially $D_{\beta}^\tau(P_X) = \HH$. For $\tau > 0$, we have
    \begin{equation*}
        D_{\beta}^\tau(P_X) 
        = \bigcap_{v \in \VV_{\beta}} \set{x \in \HH \st \abs{\inner{x, v} - \med[\inner{X, v}]} \leq (\tau^{-1} - 1)\MAD[\inner{X, v}]},
    \end{equation*}
    which is an intersection of convex slabs in $\HH$, hence it is a convex set.

    \item It suffices to show $\med[\inner{X, v}] = \inner{\mu, v}$ for all $v \in \VV_{\beta} \subseteq \SS$. Indeed, this implies $D_{\beta}(\mu; P_X) = 1$ and $D_{\beta}(x; P_X) \leq 1$ for $x \neq \mu$.
    If $P_X$ is halfspace symmetric with contiguous support, then $\pr(\inner{X,v} \geq \inner{\mu,v}) \geq 1/2$ and $\pr(\inner{X,v} \leq \inner{\mu,v}) \geq 1/2$, and the connected support implies uniqueness of the median, i.e. $\med[\inner{X,v}] = \inner{\mu,v}$ for all $v\in\SS$. For centrally symmetric $P_X$, we have $\inner{X,v} - \inner{\mu,v} \eqdis \inner{\mu,v} - \inner{X,v}$, so the set of medians is symmetric about $\inner{\mu,v}$, again implying $\med[\inner{X,v}] = \inner{\mu,v}$.

    \item We have
    \begin{equation*}
        D_{\beta}(x; P_X) 
        = D_{\beta}(x - \mu; P_{X - \mu}) 
        = D_{\beta}(\mu - x; P_{\mu - X})
        = D_{\beta}(2\mu - x; P_X),
    \end{equation*}
    using the shift invariance of the median and the central symmetry of $P_X$.

    \item The first claim follows from Theorem~\ref{thm: nonDeg} and the non-emptiness of $\VV_{\beta}$. Indeed, let $v \in \VV_\beta$. Then, as $\delta \to \infty$, we have $O_v(\delta v; P_X) \to \infty$, and consequently, $D_\beta(\delta v; P_X) \to 0$. The second claim follows directly from the quasi-concavity established in~\ref{P1}.
\end{enumerate}
\end{proof}

To prove Theorem~\ref{thm: RPDPropNov} we need the following lemma.
\begin{lemma}\label{lemma: antipodalVinequality}
    Consider a Hilbert space $\HH$ and a non-empty antipodally symmetric set $\mathcal{U} \subseteq \SS$ (that is, $\mathcal{U} = -\mathcal{U}$) with $\dim(\mathcal{U}) < \infty$. Then, for any sequence $\set{u_k}_{k=1}^\infty \subseteq \spn{\mathcal{U}} \cap \SS$, it holds that $\inf_{k \in \N} \sup_{v \in \mathcal{U}} \inner{u_k, v} > 0$.
\end{lemma}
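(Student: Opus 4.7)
The plan is to reduce the problem to a compactness argument in the finite-dimensional subspace $W := \Clo{\spn{\mathcal{U}}}$, which by assumption has finite dimension. Define the support function
\[
    h_\mathcal{U}(u) := \sup_{v \in \mathcal{U}} \inner{u, v}, \qquad u \in W.
\]
Since $\{u_k\}_{k=1}^\infty \subseteq \spn{\mathcal{U}} \cap \SS \subseteq W \cap \SS$ and $W \cap \SS$ is compact (being the unit sphere of a finite-dimensional Hilbert subspace), it suffices to show that $h_\mathcal{U}$ is continuous and strictly positive on $W \cap \SS$; the infimum will then be attained and will be positive.

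First I would check continuity. The map $h_\mathcal{U}$ is a pointwise supremum of the continuous linear functionals $u \mapsto \inner{u, v}$, $v \in \mathcal{U}$, so it is convex and lower semi-continuous on $W$. Moreover, $h_\mathcal{U}(u) \leq \norm{u}$ since $\norm{v}=1$ for each $v \in \mathcal{U}$; thus $h_\mathcal{U}$ is a finite convex function on the finite-dimensional space $W$, and hence continuous.

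Next I would establish strict positivity on $W \setminus \{0\}$. By antipodal symmetry, $h_\mathcal{U}(u) = \sup_{v \in \mathcal{U}} \abs{\inner{u, v}} \geq 0$. If $h_\mathcal{U}(u) = 0$ for some $u \in W$, then $\inner{u, v} = 0$ for every $v \in \mathcal{U}$, i.e.\ $u \perp \spn{\mathcal{U}}$, so $u \perp W$. Combined with $u \in W$ this forces $u = 0$. Hence $h_\mathcal{U}$ is strictly positive on $W \setminus \{0\}$.

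Finally, a continuous strictly positive function on the compact set $W \cap \SS$ attains a positive minimum $r > 0$, so
\[
    \inf_{k \in \N} \sup_{v \in \mathcal{U}} \inner{u_k, v}
    = \inf_{k \in \N} h_\mathcal{U}(u_k)
    \geq \min_{u \in W \cap \SS} h_\mathcal{U}(u) = r > 0,
\]
which proves the claim. I do not expect serious obstacles: the only delicate point is keeping track of which space we work in (one must use that $u_k \in W$, not merely $u_k \in \HH$, for the compactness step to succeed). As a sanity check, the alternative constructive route---picking a basis $v_1, \dots, v_d \in \mathcal{U}$ of $W$, noting that the cross-polytope $\conv\{\pm v_1, \dots, \pm v_d\} \subseteq \conv(\mathcal{U})$ contains a ball $B_r(0) \cap W$ of some radius $r > 0$, and concluding $h_\mathcal{U}(u) \geq r$ for every unit vector $u \in W$---yields the same bound.
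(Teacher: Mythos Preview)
Your proof is correct and takes a genuinely different route from the paper. The paper works by coordinates: it fixes an orthonormal basis of $\spn{\mathcal{U}}$, extracts a spanning set $w_1,\dots,w_K$ from $\mathcal{U}$ itself, expands the canonical basis vectors in terms of the $\delta(w_\ell)$, and then argues by contradiction that if $\sup_{v\in\mathcal{U}}\inner{u_k,v}$ were arbitrarily small, the identity $\norm{\delta(u_k)}_2^2 = 1$ would be violated via an explicit bound in the expansion coefficients. Your argument is instead purely topological: you identify $h_{\mathcal{U}}$ as a finite convex (hence continuous) function on the finite-dimensional space $W$, use antipodal symmetry to show $h_{\mathcal{U}}(u)=\sup_{v\in\mathcal{U}}\abs{\inner{u,v}}$ vanishes only at $0$, and conclude by compactness of $W\cap\SS$. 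What you gain is brevity and conceptual clarity; what the paper's approach gains is an (in principle) explicit quantitative lower bound in terms of the coordinate expansion, which your compactness argument does not provide. Your closing cross-polytope remark is essentially a geometric rephrasing of the paper's constructive idea, so you have in fact anticipated both approaches.
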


\begin{proof}
    Let $K = \dim(\spn{\mathcal{U}}) < \infty$, and let $\set{v_\ell}_{\ell=1}^K \subset \SS$ be an orthonormal basis of $\spn{\mathcal{U}}$ (not necessarily $\set{v_\ell}_{\ell=1}^K\subseteq \mathcal{U}$). For any $v \in \mathcal{U}$, write $v = \sum_{\ell=1}^K \delta_\ell(v)\, v_\ell$ with $\delta(v) = (\delta_1(v), \dots, \delta_K(v))\tr \in \R^K$. Then $\inner{\delta(v), \delta(w)} = \inner{v, w}$, in particular $\norm{\delta(v)}_2 = 1$. Here, $\norm{\cdot}_2$ is the usual Euclidean norm on $\R^K$. Since $\mathcal{U}$ spans $\spn{\mathcal{U}}$, the set $\{\delta(v) : v \in \mathcal{U} \} \subset \R^K$ spans $\R^K$. Thus, there exist $w_1, \dots, w_K \in \mathcal{U}$ such that $\{\delta(w_\ell)\}_{\ell=1}^K$ spans $\R^K$. For each $j \in \{ 1, \dots, K \}$, let $e_j \in \R^K$ be the $j$th canonical basis vector, and consider its expansion $e_j = \sum_{\ell=1}^K \alpha_{j,\ell} \delta(w_\ell)$, where $\alpha_{j,\ell} \in \R$.  

    Since $\mathcal{U}$ is symmetric, we have $\sup_{v \in \mathcal{U}} \inner{u_k, v} \geq 0$ for all $k$. Assume for contradiction that $\inf_k \sup_{v \in \mathcal{U}} \inner{u_k, v} = 0$. Choose $\varepsilon > 0$ such that  
    \begin{equation}\label{eq: epsChoice}
        \varepsilon^2 < \left( \sum_{j=1}^K \left( \sum_{\ell=1}^K \alpha_{j,\ell} \right)^2 \right)^{-1}
    \end{equation}
    and find $k \in \N$ such that $\sup_{v \in \mathcal{U}} \inner{u_k, v} \leq \varepsilon$. In particular, we have $\inner{u_k, w_\ell} = \inner{\delta(u_k), \delta(w_\ell)} \leq \varepsilon$. Thus, we obtain  
    \begin{equation*}
        1 = \norm{\delta(u_k)}_2^2 = \sum_{j=1}^K \inner{\delta(u_k), e_j}^2 = \sum_{j=1}^K \left( \sum_{\ell=1}^K \alpha_{j,\ell} \inner{\delta(u_k), \delta(w_\ell)} \right)^2
        \leq \varepsilon^2 \sum_{j=1}^K \left( \sum_{\ell=1}^K \alpha_{j,\ell} \right)^2,
    \end{equation*}
    which contradicts~\eqref{eq: epsChoice}. This completes the proof.
\end{proof}

\begin{proof}[\textbf{Proof of Theorem~\ref{thm: RPDPropNov}}]\mbox{}

\begin{enumerate}[label=\upshape{(N\arabic*)}]
    
    \item The proof follows similarly to \citet[Theorem 2.2]{Zuo2003}. Let $ x, y \in \HH $. Then
    \begin{align*}
        \abs{D_{\beta}(x; P_X) - D_{\beta}(y; P_X)} 
        &= \left| \frac{1}{1 + \Gamma_x(\beta)} - \frac{1}{1 + \Gamma_y(\beta)} \right| \\
        &= \frac{\abs{\Gamma_x(\beta) - \Gamma_y(\beta)}}{(1 + \Gamma_x(\beta))(1 + \Gamma_y(\beta))} 
        \leq \abs{\Gamma_x(\beta) - \Gamma_y(\beta)},
    \end{align*}
    where $\Gamma_z(\beta) = \sup_{v \in \VV_{\beta}} O_v(z; P_X)$ for $z\in\HH$. It remains to bound the difference
    \begin{align*}
        \abs{\Gamma_x(\beta) - \Gamma_y(\beta)}
        &\leq \sup_{v \in \VV_{\beta}} 
        \frac{ 
            \abs{
                \abs{\inner{x, v} - \med[\inner{X, v}]} - 
                \abs{\inner{y, v} - \med[\inner{X, v}]}
            } 
        }{\MAD[\inner{X, v}]} \\
        &\leq \sup_{v \in \VV_{\beta}} 
        \frac{\abs{\inner{x - y, v}}}{\MAD[\inner{X, v}]} 
        \leq \frac{1}{\beta} \norm{x - y}.
    \end{align*}

    \item Using the shift-invariance of the median and shift-invariance of the MAD, for all $v \in \SS$ we get
    \begin{align*}
        O_{\mathcal{S}v}(\mathcal{T}x; P_{\mathcal{T}X})
        &= \frac{\abs{\inner{\mathcal{S}x + e, \mathcal{S}v} - \med[\inner{\mathcal{S}X + e, \mathcal{S}v}]}}{\MAD[\inner{\mathcal{S}X + e, \mathcal{S}v}]} 
        =\frac{\abs{\inner{\mathcal{S}x, \mathcal{S}v} - \med[\inner{\mathcal{S}X, \mathcal{S}v}]}}{\MAD[\inner{\mathcal{S}X, \mathcal{S}v}]} \\
        &= \frac{\abs{\inner{x, v} - \med[\inner{X, v}]}}{\MAD[\inner{X, v}]}
        = O_v(x; P_X).
    \end{align*}

    To conclude, it suffices to show that $\VV_{\beta}(\mathcal{T}X) = \mathcal{S} \VV_{\beta}$, which follows from the fact that for any $v \in \SS$ it holds that $\MAD[\inner{\mathcal{T}X, \mathcal{S}v}] = \MAD[\inner{\mathcal{S}X, \mathcal{S}v}] = \MAD[\inner{X, v}]$.

    \item By inequalities~\eqref{medMADinequalities} and using the antipodal symmetry of $ \VV_{\beta} $, we have
    \begin{equation}
        \sup_{v \in \VV_{\beta}} O_v(x_k; P_X)
        \geq \sup_{v \in \VV_{\beta}} \frac{\abs{\inner{x_k, v}} - \abs{\med[\inner{X, v}]}}{\MAD[\inner{X, v}]}
        \geq \frac{\sup_{v \in \VV_{\beta}} \inner{x_k, v} - \med[\norm{X}]}{2\,\med[\norm{X}]}.\label{eq:1}
    \end{equation}

    By assumption~\eqref{condVanish} and the fact that $\norm{x_k} \xto[k\to\infty] \infty $, we obtain
    \begin{equation*}
        \liminf_{k \to \infty} \sup_{v \in \VV_{\beta}} \inner{x_k, v} 
        \geq \liminf_{k \to \infty} \norm{x_k} \cdot 
        \liminf_{k \to \infty} \sup_{v \in \VV_{\beta}} 
        \inner{\frac{x_k}{\norm{x_k}}, v} = \infty,
    \end{equation*}
    hence $\sup_{v \in \VV_{\beta}} \inner{x_k, v} \to \infty$. This, together with~\eqref{eq:1}, entails $\sup_{v \in \VV_{\beta}} O_v(x_k; P_X) \to \infty$, implying $D_{\beta}(x_k; P_X) \to 0$ as $k \to \infty$. 
    
    Now assume that $\set{x_k}_{k=1}^\infty \subset \spn{\VV_{\beta}}$ and $\dim \spn{\set{x_k}_{k=1}^\infty} < \infty$. Denote by $\mathbb{L}$ the linear span of $\set{x_k}_{k=1}^\infty$. Then, we can bound
    \begin{equation*}
        \liminf_{k \to \infty} \sup_{v \in \VV_{\beta}} \inner{\frac{x_k}{\norm{x_k}}, v}\geq  \inf_{k\in\N} \sup_{v \in \VV_{\beta}} \inner{\frac{x_k}{\norm{x_k}}, v}
        \geq \inf_{k\in\N}\sup_{v \in \VV_{\beta}\cap\,\mathbb{L}} \inner{\frac{x_k}{\norm{x_k}}, v}>0,
    \end{equation*}
    where the strict inequality holds by applying Lemma~\ref{lemma: antipodalVinequality} with $u_k=x_k/\norm{x_k}$ and a non-empty antipodally-symmetric set $\mathcal{U}=\VV_\beta\cap\,\mathbb{L}$, which is finite-dimensional.

    \item Last part follows directly from the fact that $\widehat{x}$ is uniquely determined by the condition $\inner{\widehat{x}, v} = \inner{x, v}$ for all $v \in \Clo{\spn{\VV_{\beta}}}\supseteq\VV_{\beta}$.
\end{enumerate}
\end{proof}

\begin{proof}[\textbf{Proof of Theorem~\ref{thm: medianExist}}]\mbox{}
    Let $\mathbb{L}=\Clo{\spn{\VV_{\beta}}}$.
    For $\tau \in [0,1]$, denote by $G(\tau)$ the intersection of the $\tau$-upper level set~\eqref{eq: UpperLevelSet} with $\mathbb{L}$, that is $G(\tau)=\set{x \in \mathbb{L} \st D_{\beta}(x; P_X) \geq \tau}$. The sets $G(\tau)$ are closed (by part~\ref{N1} of Theorem~\ref{thm: RPDPropNov}) and convex (by part~\ref{P1} of Theorem~\ref{thm: RPDPropNov}). 
    By~\citet[Theorem 3.1.2]{Rudin1991}, they are also weakly closed. Moreover, for any $\tau \in (0,1]$, the set $G(\tau)$ is weakly bounded in $\mathbb{L}$, meaning that for any $v \in \mathbb{L} \cap \SS$ it holds that $\sup_{x \in G(\tau)} \abs{\inner{x,v}} < \infty$. This follows from parts~\ref{N3} and~\ref{N1} of Theorem~\ref{thm: RPDPropNov}. 
    Since $\mathbb{L}$ is a closed subspace of the Hilbert space $\HH$, it is reflexive. 
    Therefore, as the sets $G(\tau)$, $\tau \in (0,1]$, are weakly closed and weakly bounded, the Banach–Alaoglu theorem \citep[Theorem~3.15]{Rudin1991} implies that they are weakly compact.

    Let $\tau_0 = \sup_{x \in \mathbb{L}} D_\beta(x; P_X) \leq 1$.
    By Theorem~\ref{thm: nonDeg}, we have $\tau_0 > 0$.
    To show that $G(\tau_0) \neq \emptyset$, note that $G(\tau_0) = \bigcap_{0 < \tau < \tau_0} G(\tau)$, which is the intersection of a nested family of non-empty weakly compact sets, and is therefore non-empty. Convexity of $G(\tau_0) = M(P_X)$ follows directly from~\ref{P1} in Theorem~\ref{thm: RPDPropInh}.
\end{proof}

\section{Proofs for Section~3.2}

For the proofs in Section~3.2, let $P_X(v)\in\P{\R}$ denote the distribution of $\inner{X,v}$ and $\widehat{P}_n(v)\in\P{\R}$ the distribution obtained by projecting $\widehat{P}_n$ onto the direction $v \in\HH \setminus\{0\}$. Define
$m(v) = \med[\inner{X, v}]$ and $s(v) = \MAD[\inner{X, v}]$, with corresponding estimators
$\widehat{m}(v) = \widehat{\med}\left[\set{\inner{X_i, v}}_{i=1}^n\right]$ and 
$\widehat{s}(v) = \widehat{\MAD}\left[\set{\inner{X_i, v}}_{i=1}^n\right]$. Both $\widehat{m}$ and $\widehat{s}$, of course, depend on $n$, but this will be suppressed in the notation. Note that since the function $t \mapsto 1/(1 + t)$ is $1$-Lipschitz for $t \geq 0$, we can bound
\begin{multline}
    \abs{D_{\beta}(x; \widehat{P}_n) - D_{\beta}(x; P_X)} \leq \abs{\sup_{v \in \widehat{\VV}_{\beta,n}} O_v(x; \widehat{P}_n) - \sup_{v \in \VV_\beta} O_v(x; P_X)} \\
    \leq \underbrace{\abs{\sup_{v \in \widehat{\VV}_{\beta,n}} O_v(x; \widehat{P}_n) - \sup_{v \in \widehat{\VV}_{\beta,n}} O_v(x; P_X)}}_{U_n(x)} 
    + \underbrace{\abs{\sup_{v \in \widehat{\VV}_{\beta,n}} O_v(x; P_X) - \sup_{v \in \VV_\beta} O_v(x; P_X)}}_{W_n(x)}.\label{eqUnWn}
\end{multline}

Before proving Theorem~\ref{thm: consistency}, we state the following auxiliary lemmas.

\begin{lemma}\label{lemma: medMADunif}
    Let $X \sim P_X \in \P{\HH}$. Assume that $P_X$ has contiguous support. Then it holds that $\sup_{v\in\SS}\abs{\widehat{m}(v)-m(v)} \xasto[n\to\infty] 0$ and $\sup_{v\in\SS}\abs{\widehat{s}(v)-s(v)} \xasto[n\to\infty] 0$.
\end{lemma}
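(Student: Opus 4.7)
The strategy is to decouple the proof into (i) a uniform Glivenko-Cantelli statement over the class of half-spaces in $\HH$ and (ii) a deterministic inversion that converts uniform CDF convergence into uniform convergence of medians and MADs. Throughout, write $F_v(t)=\pr(\inner{X,v}\leq t)$ and $\widehat F_{v,n}(t)=n^{-1}\sum_{i=1}^n\indic\{\inner{X_i,v}\leq t\}$. The first step is to establish
\begin{equation*}
    \varepsilon_n := \sup_{v\in\SS,\,t\in\R}\abs{\widehat F_{v,n}(t)-F_v(t)} \xasto[n\to\infty] 0.
\end{equation*}
Since $P_X$ is a Borel probability measure on a separable Hilbert space it is tight, so for every $\delta>0$ there is a compact $K\subset\HH$ with $P_X(K^c)<\delta$. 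Any compact subset of $\HH$ is approximately finite-dimensional, so the restrictions of half-spaces to a well-chosen finite-dimensional subspace form a VC class of bounded VC dimension. Applying the classical Vapnik-Chervonenkis inequality on this finite-dimensional piece and controlling the residual outside $K$ via the tightness estimate yields the displayed uniform convergence almost surely.

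The second step converts the uniform CDF bound into uniform median convergence. For each $v\in\SS$ the defining inequalities of the sample median yield $F_v(\widehat m(v))\in[1/2-\varepsilon_n-1/n,\,1/2+\varepsilon_n+1/n]$. The key quantitative step is the uniform separation estimate
\begin{equation*}
    \delta(\eta):=\inf_{v\in\SS}\min\bigl\{F_v(m(v)+\eta)-\tfrac12,\;\tfrac12-F_v(m(v)-\eta)\bigr\}>0 \quad \text{for every } \eta>0,
\end{equation*}
from which $\sup_{v\in\SS}\abs{\widehat m(v)-m(v)}\leq\eta$ follows whenever $\varepsilon_n+1/n<\delta(\eta)$, proving $\sup_{v\in\SS}\abs{\widehat m(v)-m(v)}\asto 0$. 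To verify $\delta(\eta)>0$ I would argue by contradiction: given a minimizing sequence $\{v_k\}\subset\SS$, extract via Banach-Alaoglu a weakly convergent subsequence $v_{k_j}\weakto v^\star\in\BB$ and exploit the fact that $v_{k_j}\weakto v^\star$ in $\HH$ implies $\inner{X,v_{k_j}}\to\inner{X,v^\star}$ almost surely and $F_{v_{k_j}}\weakto F_{v^\star}$. When $v^\star\neq 0$, the contiguous-support assumption yields that $F_{v^\star}$ is strictly increasing on its interval support, extending part~\ref{V2} of Theorem~\ref{thm: Vprop} from norm to weak convergence and giving both $m(v_{k_j})\to m(v^\star)$ and $F_{v_{k_j}}(m(v_{k_j})+\eta)\to F_{v^\star}(m(v^\star)+\eta)>1/2$, contradicting the minimizing property. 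The degenerate boundary case $v^\star=0$ is handled separately: $\inner{X,v_{k_j}}\to 0$ almost surely forces $m(v_{k_j})\to 0$ and $F_{v_{k_j}}(m(v_{k_j})+\eta)\to 1\neq 1/2$.

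The same program applies to the MADs. The symmetric slab class $\{\{x\in\HH:\abs{\inner{x,v}-m(v)}\leq t\}:v\in\SS,\,t\geq 0\}$, being an intersection of two half-space families, is again uniformly Glivenko-Cantelli, so the empirical CDFs of $\abs{\inner{X,v}-m(v)}$ converge uniformly in $v$ and $t$. The uniform control of $\abs{\widehat m(v)-m(v)}$ from the previous step allows us to replace the population centering $m(v)$ by the sample centering $\widehat m(v)$ with vanishing error in the empirical CDFs, and an analogous separation estimate located now at $s(v)$---the median of $\abs{\inner{X,v}-m(v)}$---delivers $\sup_{v\in\SS}\abs{\widehat s(v)-s(v)}\asto 0$. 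The main obstacle throughout is the non-compactness of $\SS$ in the norm topology of infinite-dimensional $\HH$, which rules out any direct equicontinuity-plus-compactness uniformization; the workaround is the weak compactness of $\BB$ combined with weak continuity of the relevant functionals, with delicate treatment of the boundary situation where weakly convergent sequences on $\SS$ leave the sphere.
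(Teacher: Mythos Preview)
Your strategy is sound and constitutes a genuine alternative to the paper's argument, though the two proofs share the same backbone: weak compactness of the closed unit ball $\BB$ together with weak sequential continuity of $v\mapsto m(v)$ and $v\mapsto s(v)$. The paper does not separate the probabilistic and deterministic layers as you do. Instead, it cites \citet[Lemma~6.1 and Theorem~6.1]{BBTW11} for the uniform Prohorov statement $\sup_{v\in\SS} d_P(\widehat P_n(v),P_X(v))\asto 0$, then argues directly on the maximizing direction: pick $u_n\in\BB$ attaining $a_n=\sup_{v\in\BB}|\widehat m(v)-m(v)|$ (the maximum exists by weak continuity on the weakly compact $\BB$), extract $u_n\weakto u$, combine $d_P(\widehat P_n(u_n),P_X(u_n))\to 0$ with $d_P(P_X(u_n),P_X(u))\to 0$ to get $\widehat P_n(u_n)\weakdistto P_X(u)$, and invoke continuity of the median functional at $P_X(u)$ from \citet{Huber_Ronchetti2009}. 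This bypasses your separation estimate $\delta(\eta)>0$ entirely. Your route is more modular---a clean split into ``uniform CDF control'' and ``uniform quantile inversion''---and more elementary in spirit, at the cost of extra bookkeeping. The paper's route is shorter because it offloads both hard steps to references.

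Two places in your sketch deserve tightening. First, the tightness-plus-VC argument for $\varepsilon_n\asto 0$ is stated at a level of abstraction where the actual work is invisible; the result is true and standard, but you should either fill in the approximation carefully or simply cite it (it is essentially what \citet{BBTW11} provide, and your uniform Kolmogorov--Smirnov statement is in fact stronger than the uniform Prohorov statement the paper uses). Second, in the $v^\star\neq 0$ branch of your $\delta(\eta)>0$ argument you write $F_{v_{k_j}}(m(v_{k_j})+\eta)\to F_{v^\star}(m(v^\star)+\eta)$, which need not hold if $F_{v^\star}$ has an atom at $m(v^\star)+\eta$; what you actually get from a.s.\ convergence of $\inner{X,v_{k_j}}$ and Fatou is $\liminf_j F_{v_{k_j}}(m(v_{k_j})+\eta)\geq \pr(\inner{X,v^\star}<m(v^\star)+\eta)$, and uniqueness of the median under contiguous support then forces this lower bound to exceed $1/2$, which is all you need. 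Your treatment of the degenerate case $v^\star=0$ is correct.
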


\begin{proof}
We first prove that the real mappings
\begin{gather*}
    \psi_1\colon v\mapsto m(v), \quad 
    \psi_2\colon v\mapsto \widehat{m}(v), \quad
    \psi_3\colon v\mapsto s(v), \quad
    \psi_4\colon v\mapsto \widehat{s}(v),
\end{gather*}
defined on $\HH$ are weakly sequentially continuous.
Indeed, let $v_k\xweakto[k\to\infty] v$, 
meaning that $\inner{u,v_k}\xto[k\to\infty]\inner{u,v}$ for all $u\in\HH$.
In particular, $\inner{X(\omega),v_k}\xto[k\to\infty]\inner{X(\omega),v}$ for all $\omega\in\Omega$. Consequently, $\inner{X,v_k}\xdistto[k\to\infty] \inner{X,v}$. 
By~\citet[Theorem~3.6 and Example~3.1]{Huber_Ronchetti2009}, 
the median is continuous under weak convergence provided that the limit distribution has connected support,
which holds for $\inner{X,v}$ by our assumption on $P_X$. 
Hence, $m(v_k)\xto[k\to\infty] m(v)$.
It follows that $\abs{\inner{X,v_k}-m(v_k)} \xdistto[k\to\infty] \abs{\inner{X,v}-m(v)}$, and the same argument implies $s(v_k)\xto[k\to\infty] s(v)$. 
Thus, $\psi_1$ and $\psi_3$ are weakly sequentially continuous.  

Next, because $v_k\xweakto[k\to\infty] v$, 
we also have $\inner{X_i, v_k}\xto[k\to\infty]\inner{X_i, v}$ for all $i\in\{1,\ldots,n\}$. 
Therefore, all order statistics of $\{\inner{X_i,v_k}\}_{i=1}^n$ converge to the corresponding order statistics of $\{\inner{X_i,v}\}_{i=1}^n$ as $k\to\infty$. 
This yields $\widehat{m}(v_k)\xto[k\to\infty]\widehat{m}(v)$ and $\widehat{s}(v_k)\xto[k\to\infty]\widehat{s}(v)$, 
showing that $\psi_2$ and $\psi_4$ are also weakly sequentially continuous.

\medskip
Note that since we assume a separable Hilbert space $\HH$, it follows that the relative weak topology on the unit ball $\BB=\{v\in\HH \colon \norm{v}\leq 1\}$ is metrizable~\citep[Theorem~3.16]{Rudin1991}, since $\BB$ is weakly compact. Consequently, because $\psi_1,\dots,\psi_4$ are weakly sequentially continuous, the restrictions of $\psi_1,\dots,\psi_4$ to the unit ball $\BB$ are also weakly continuous~\citep[p. 395]{Rudin1991}. 

\medskip
Denote by $d_P$ the Prohorov distance on $\P{\HH}$ \citep[Section~11.3]{Dudley2002}. Since we assume that $\HH$ is separable, it follows that $d_P$ metrizes the weak topology of probability measures on $\P{\HH}$.
By~\citet[Lemma~6.1 and Theorem~6.1]{BBTW11}, we have
\begin{equation}\label{eq: unifProh}
    \sup_{v\in\SS} d_P(\widehat{P}_n(v),P_X(v)) \xasto[n\to\infty] 0.
\end{equation}
Restrict ourselves to the set $\mathcal{A}\subseteq \Omega$ with $\pr(\mathcal{A})=1$ on which this convergence holds.

\medskip
We prove the convergence result for median; 
the case of $\MAD$ is analogous. 
The following argument follows~\citet[proof of Theorem~6.2]{BBTW11}. 
Define $a_n=\sup_{v\in\BB}\abs{\widehat{m}(v)-m(v)}$, and we aim to show that $L=\limsup_{n\to\infty}a_n=0$. 
Suppose, for contradiction, that $L>0$.  

Since $\psi_1$ and $\psi_2$ are weakly continuous on $\BB$,
the mapping $v\mapsto\abs{\widehat{m}(v)-m(v)}$ is weakly continuous on $\BB$.
Because $\BB$ is weakly compact,
for each $n\in\N$ there exists $u_n\in\BB$ such that $a_n = \abs{\widehat{m}(u_n)-m(u_n)}$. Assume that $a_n\xto[n\to\infty] L$; otherwise, choose a subsequence satisfying this. 
Similarly, assume $u_n\xweakto[n\to\infty] u\in\BB$; 
if not, pass to a weakly convergent subsequence, 
which exists due to the weak compactness of $\BB$.  
From~\eqref{eq: unifProh}, $d_P(\widehat{P}_n(u_n),P_X(u_n))\xto[n\to\infty] 0$.
Since $u_n\xweakto[n\to\infty] u$, we also have that $d_P(P_X(u_n),P_X(u))\xto[n\to\infty] 0$.
By the triangle inequality, $d_P(\widehat{P}_n(u_n),P_X(u))\xto[n\to\infty] 0$. As before, because $P_X(u)$ has connected support,
\citet[Theorem~3.6 and Example~3.1]{Huber_Ronchetti2009} implies that $\widehat{m}(u_n)\xto[n\to\infty] m(u)$. Hence $a_n\xto[n\to\infty] 0$, contradicting $L>0$. 
This shows that $L=0$. Since $\SS\subset\BB$, this completes the proof.
\end{proof}

The following lemma addresses the first term $U_n(x)$ of~\eqref{eqUnWn}. 

\begin{lemma}\label{lemma: Un}
    Under the assumptions of Theorem~\ref{thm: consistency}, we have $\sup_{\norm{x}\leq M} U_n(x) \to 0$ as $n \to \infty$, for any $M>0$.
\end{lemma}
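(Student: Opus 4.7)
The plan is to bound $U_n(x)$ by the uniform norm of $O_{\cdot}(x;\widehat{P}_n) - O_{\cdot}(x;P_X)$ on $\widehat{\VV}_{\beta,n}$, and then exploit the uniform a.s.\ convergence of $\widehat{m}$ and $\widehat{s}$ from Lemma~\ref{lemma: medMADunif} together with the lower bound $\widehat{s}(v) \geq \beta$ that is built into the definition of $\widehat{\VV}_{\beta,n}$.

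First I would use the elementary inequality $|\sup f - \sup g| \leq \sup |f - g|$ to write
\[
U_n(x) \leq \sup_{v \in \widehat{\VV}_{\beta,n}} \left| \frac{|\inner{x,v} - \widehat{m}(v)|}{\widehat{s}(v)} - \frac{|\inner{x,v} - m(v)|}{s(v)}\right|.
\]
For fixed $v$, writing $A = \inner{x,v} - \widehat{m}(v)$ and $B = \inner{x,v} - m(v)$, the identity $\tfrac{|A|}{\widehat{s}(v)} - \tfrac{|B|}{s(v)} = \tfrac{(|A|-|B|)s(v) + |B|(s(v)-\widehat{s}(v))}{\widehat{s}(v)s(v)}$ together with $\bigl||A|-|B|\bigr| \leq |A-B| = |\widehat{m}(v) - m(v)|$ and the bound $|B| \leq \norm{x} + |m(v)| \leq M + \med[\norm{X}]$ (using~\eqref{medMADinequalities} and $\norm{x}\leq M$) yields
\[
\left| \frac{|A|}{\widehat{s}(v)} - \frac{|B|}{s(v)}\right|
\leq \frac{|\widehat{m}(v) - m(v)|}{\widehat{s}(v)} + \frac{(M + \med[\norm{X}])\,|\widehat{s}(v) - s(v)|}{\widehat{s}(v)\, s(v)}.
\]

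Next, I would control the denominators uniformly. On $\widehat{\VV}_{\beta,n}$ one has $\widehat{s}(v) \geq \beta$ by definition. By Lemma~\ref{lemma: medMADunif}, there exists an a.s.\ event on which $\sup_{v\in\SS}|\widehat{s}(v)-s(v)| < \beta/2$ for all sufficiently large $n$; on this event and for such $n$, every $v \in \widehat{\VV}_{\beta,n}$ satisfies $s(v) \geq \widehat{s}(v) - \beta/2 \geq \beta/2$. Combining these bounds,
\[
\sup_{\norm{x} \leq M} U_n(x)
\leq \frac{1}{\beta}\sup_{v \in \SS} |\widehat{m}(v) - m(v)|
+ \frac{2(M + \med[\norm{X}])}{\beta^2}\sup_{v \in \SS}|\widehat{s}(v) - s(v)|.
\]
Both terms on the right converge to $0$ a.s.\ by Lemma~\ref{lemma: medMADunif}, which finishes the argument.

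The main obstacle is handling the random denominator: the supremum in $U_n(x)$ is taken over the \emph{random} set $\widehat{\VV}_{\beta,n}$, and to conclude we need a deterministic lower bound on the population scale $s(v)$ along this random set. This is exactly the point where the uniform consistency of $\widehat{s}$ proved in Lemma~\ref{lemma: medMADunif} is essential; without it, the factor $s(v)$ in the denominator could in principle approach $0$ along sequences in $\widehat{\VV}_{\beta,n}$, and the bound above would break down. Once this uniform scale control is in place, the rest of the argument reduces to the two uniform-in-$v$ convergences supplied by Lemma~\ref{lemma: medMADunif}, and uniformity in $x$ over $\{\norm{x}\leq M\}$ is automatic because $x$ enters the bound only through the constant $M$.
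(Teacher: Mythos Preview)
Your proof is correct and follows essentially the same approach as the paper: bound $U_n(x)$ by the supremum of $|O_v(x;\widehat P_n)-O_v(x;P_X)|$ over $\widehat\VV_{\beta,n}$, use the built-in lower bound $\widehat s(v)\geq\beta$ together with the uniform consistency of $\widehat s$ from Lemma~\ref{lemma: medMADunif} to force $s(v)\geq\beta/2$ eventually, and then conclude via the uniform convergence of $\widehat m$ and $\widehat s$. The only point you omit, which the paper makes explicit, is that the equicontinuity assumption~\eqref{consAssump} guarantees $\VV_{\beta+\varepsilon}\neq\emptyset$ for some $\varepsilon>0$, which combined with Lemma~\ref{lemma: medMADunif} ensures $\widehat\VV_{\beta,n}\neq\emptyset$ a.s.\ for large $n$ so that $U_n(x)$ is well defined.
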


\begin{proof}
    Note that, by the assumption~\eqref{consAssump}, there exists $\varepsilon > 0$ such that $\VV_{\beta + \varepsilon} \neq \emptyset$. Therefore, by Lemma~\ref{lemma: medMADunif}, for sufficiently large $n$, it holds a.s. that $\widehat{\VV}_{\beta,n} \neq \emptyset$, which we henceforth assume.
    Furthermore, by Lemma~\ref{lemma: medMADunif}, we may restrict our attention to the set $\mathcal{A} \subseteq \Omega$ with $\pr(\mathcal{A}) = 1$, on which
    \begin{equation}\label{eq: medMADunif}
        \sup_{v\in\SS}\abs{\widehat{m}(v)-m(v)} \xrightarrow[n \to \infty]{} 0,
        \quad
        \sup_{v\in\SS}\abs{\widehat{s}(v)-s(v)} \xrightarrow[n \to \infty]{} 0.
    \end{equation}
        
    Consider $n\in\N$ large enough so that $\sup_{v\in\SS}\abs{\widehat{s}(v)-s(v)} \leq \beta/2$. Then, for any $v\in\widehat{\VV}_{\beta,n}$, it follows that $s(v)\geq \widehat{s}(v)-\beta/2 \geq \beta/2$. Moreover, by~\eqref{medMADinequalities},
    for any $v\in\SS$ we have $\abs{m(v)}\leq \med[\norm{X}]$ and $s(v)\leq 2\,\med[\norm{X}]$. 
    Using these bounds, for any $\norm{x}\leq M$ and all sufficiently large $n$, we obtain
    \begin{align*}
        U_n(x)
        &= \abs{\sup_{v \in \widehat{\VV}_{\beta,n}} O_v(x; \widehat{P}_n) - \sup_{v \in \widehat{\VV}_{\beta,n}} O_v(x; P_X)} \leq \sup_{v \in \widehat{\VV}_{\beta,n}} \abs{O_v(x; \widehat{P}_n)-O_v(x; P_X)} \\
        &\leq \sup_{v \in \widehat{\VV}_{\beta,n}}\abs{\frac{\abs{\inner{x,v}-\widehat{m}(v)}}{\widehat{s}(v)} - \frac{\abs{\inner{x,v}-m(v)}}{s(v)}} \\
        &= \sup_{v \in \widehat{\VV}_{\beta,n}} \frac{1}{\widehat{s}(v)\,s(v)} \Big| s(v)\abs{\inner{x,v}-\widehat{m}(v)} - \widehat{s}(v)\abs{\inner{x,v}-m(v)} \Big| \\
        &\leq \sup_{v \in \widehat{\VV}_{\beta,n}}\frac{1}{\widehat{s}(v)\,s(v)} 
        \Big( (\abs{\inner{x,v}}+\abs{m(v)})\abs{s(v)-\widehat{s}(v)} + s(v)\abs{m(v)-\widehat{m}(v)} \Big) \\
        &\leq \frac{2}{\beta^2}\Big( (M+\med[\norm{X}]) \sup_{v \in \SS}\abs{\widehat{s}(v)-s(v)} 
        + 2\,\med[\norm{X}] \sup_{v \in \SS}\abs{\widehat{m}(v)-m(v)} \Big).
    \end{align*}
    By~\eqref{eq: medMADunif}, the right-hand side converges to $0$ as $n\to\infty$. Since this holds for every $\omega\in\mathcal{A}$ with $\pr(\mathcal{A})=1$, the claim follows.
\end{proof}

Next, we address the second term $W_n(x)$ of~\eqref{eqUnWn}.
Recall that $\Gamma_x(t)=\sup_{v\in\VV_{t}}O_v(x; P_X)$.
\begin{lemma}\label{lemma: Wn}
    Let $X \sim P_X \in \P{\HH}$ have contiguous support. Consider $\beta>0$ and $\mathcal{F}\subset\HH$. Assume that $\set{\Gamma_x\st x\in\mathcal{F}}$ is equicontinuous at $t=\beta$, i.e.,
    \begin{equation*}
        \lim_{t\to \beta}\sup_{x\in\mathcal{F}}\abs{\Gamma_x(t)-\Gamma_x(\beta)}=0.
    \end{equation*}
    Then $\sup_{x\in\mathcal{F}} W_n(x)\xasto[n\to\infty] 0$.
\end{lemma}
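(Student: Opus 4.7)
The plan is to use the uniform convergence $\sup_{v\in\SS}\abs{\widehat{s}(v)-s(v)} \xasto[n\to\infty] 0$ from Lemma~\ref{lemma: medMADunif} to sandwich the random set $\widehat{\VV}_{\beta,n}$ between two deterministic regularized direction sets $\VV_{\beta+\varepsilon}$ and $\VV_{\beta-\varepsilon}$, and then to invoke the equicontinuity of $\set{\Gamma_x\st x\in\mathcal{F}}$ at $t=\beta$ to force the associated suprema to collapse onto $\Gamma_x(\beta)$ uniformly in $x\in\mathcal{F}$.

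Concretely, fix $\varepsilon>0$ small enough that $\beta-\varepsilon>0$. By Lemma~\ref{lemma: medMADunif}, there is a measurable set $\mathcal{A}\subseteq\Omega$ with $\pr(\mathcal{A})=1$ such that for every $\omega\in\mathcal{A}$ we can choose $n_0=n_0(\omega,\varepsilon)$ with $\sup_{v\in\SS}\abs{\widehat{s}(v)-s(v)}\leq\varepsilon$ whenever $n\geq n_0$. On such an event and for such $n$, a direct comparison of the defining inequalities $\widehat{s}(v)\geq\beta$ and $s(v)\geq\beta\pm\varepsilon$ yields the deterministic inclusions
\begin{equation*}
\VV_{\beta+\varepsilon} \subseteq \widehat{\VV}_{\beta,n} \subseteq \VV_{\beta-\varepsilon}.
\end{equation*}
Taking the supremum of $v\mapsto O_v(x;P_X)$ over these nested sets (and using monotonicity of the supremum in the set) gives
\begin{equation*}
\Gamma_x(\beta+\varepsilon) \;\leq\; \sup_{v \in \widehat{\VV}_{\beta,n}} O_v(x; P_X) \;\leq\; \Gamma_x(\beta-\varepsilon).
\end{equation*}
Since $t\mapsto\VV_t$ is non-increasing, $\Gamma_x$ is non-increasing as well, so $\Gamma_x(\beta-\varepsilon)\geq\Gamma_x(\beta)\geq\Gamma_x(\beta+\varepsilon)$. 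Combining these two chains of inequalities yields
\begin{equation*}
W_n(x) \;\leq\; \max\bigl\{\,\Gamma_x(\beta-\varepsilon)-\Gamma_x(\beta),\; \Gamma_x(\beta)-\Gamma_x(\beta+\varepsilon)\,\bigr\} \;\leq\; \sup_{t\in\{\beta-\varepsilon,\beta+\varepsilon\}}\abs{\Gamma_x(t)-\Gamma_x(\beta)}.
\end{equation*}

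The proof then finishes by taking $\sup_{x\in\mathcal{F}}$ of both sides and letting $\varepsilon\downarrow 0$: the equicontinuity hypothesis~\eqref{consAssump} at $t=\beta$ ensures $\sup_{x\in\mathcal{F}}\abs{\Gamma_x(\beta\pm\varepsilon)-\Gamma_x(\beta)}\to 0$, which gives $\limsup_{n\to\infty}\sup_{x\in\mathcal{F}}W_n(x)\leq 0$ on $\mathcal{A}$, as desired. The only genuinely delicate point is ensuring that the suprema in the sandwich are not taken over empty sets: the equicontinuity~\eqref{consAssump} implicitly forces $\Gamma_x(\beta+\varepsilon)<\infty$, hence $\VV_{\beta+\varepsilon}\neq\emptyset$, for all sufficiently small $\varepsilon>0$ (this is exactly the observation exploited at the start of the proof of Lemma~\ref{lemma: Un}); the inclusion $\VV_{\beta+\varepsilon}\subseteq\widehat{\VV}_{\beta,n}$ then guarantees $\widehat{\VV}_{\beta,n}\neq\emptyset$ eventually. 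Beyond this bookkeeping, the argument is a fairly routine set-theoretic sandwiching combined with monotonicity, with Lemma~\ref{lemma: medMADunif} doing all the probabilistic heavy lifting.
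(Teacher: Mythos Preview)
Your proposal is correct and follows essentially the same approach as the paper: sandwich $\widehat{\VV}_{\beta,n}$ between $\VV_{\beta+\varepsilon}$ and $\VV_{\beta-\varepsilon}$ via Lemma~\ref{lemma: medMADunif}, then invoke the equicontinuity of $\{\Gamma_x\}_{x\in\mathcal{F}}$ at $\beta$. The paper's version is slightly more streamlined in that it works directly with the random quantity $\delta_n=\sup_{v\in\SS}\abs{\widehat{s}(v)-s(v)}$ rather than introducing a fixed $\varepsilon$ and taking $\varepsilon\downarrow 0$, but the underlying argument is the same.
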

\begin{proof}
    Denote $\delta_n = \sup_{v\in\SS}\abs{\widehat{s}(v)-s(v)}$. Note that for all $v\in\SS$ we have 
    \[
        s(v)\geq \beta+\delta_n \;\implies\; \widehat{s}(v)\geq \beta
        \quad\text{and}\quad
        \widehat{s}(v)\geq \beta \;\implies\; s(v)\geq \beta-\delta_n.
    \]
    Consequently,
    \begin{equation*}
        \VV_{\beta+\delta_n}=\set{v\in\SS\st s(v)\geq \beta+\delta_n}\subseteq \widehat{\VV}_{\beta,n} \subseteq \set{v\in\SS\st s(v)\geq \beta-\delta_n}=\VV_{\beta-\delta_n}.
    \end{equation*}
    It follows that 
    \begin{equation*}
        \Gamma_x(\beta+\delta_n)=\sup_{v\in\VV_{\beta+\delta_n}}O_v(x; P_X)\leq
        \sup_{v\in\widehat{\VV}_{\beta,n}}O_v(x; P_X)\leq
        \sup_{v\in\VV_{\beta-\delta_n}}O_v(x; P_X)=\Gamma_x(\beta-\delta_n).
    \end{equation*}
    Hence, we can bound 
    \[
    \sup_{x\in\mathcal{F}} W_n(x)\leq \sup_{x\in\mathcal{F}}\abs{\Gamma_x(\beta-\delta_n)-\Gamma_x(\beta+\delta_n)}.
    \]
    Since $\delta_n = \sup_{v\in\SS}\abs{\widehat{s}(v)-s(v)}\xasto[n\to\infty] 0$, this converges almost surely to $0$ by the assumption.
\end{proof}

We are now ready to prove Theorem~\ref{thm: consistency}.
\begin{proof}[\textbf{Proof of Theorem~\ref{thm: consistency}}]\mbox{}
    Recall that, by~\eqref{eqUnWn}, we have
    \begin{equation*}
            \sup_{x\in\mathcal{F}}\abs{D_{\beta}(x; \widehat{P}_n) - D_{\beta}(x; P_X)} \leq \sup_{x\in\mathcal{F}}U_n(x) + \sup_{x\in\mathcal{F}} W_n(x).
    \end{equation*}
    Since we assume that $\mathcal{F}\subset \HH$ is bounded, there is $M>0$ such that $\mathcal{F}\subseteq\set{x\in\HH\st \norm{x}\leq M}$. By Lemma~\ref{lemma: Un}, it follows that $\sup_{x\in\mathcal{F}}U_n(x) \leq \sup_{\norm{x}\leq M}U_n(x) \xasto[n\to\infty] 0$. Lemma~\ref{lemma: Wn}, together with the equicontinuity assumption, implies that also $\sup_{x\in\mathcal{F}} W_n(x) \xasto[n\to\infty] 0$, concluding the proof.
\end{proof}

\begin{proof}[\textbf{Proof of Theorem~\ref{thm: consistencyPoint}}]\mbox{}
First, let $\beta_k \xto[k\to\infty] \beta$ from the left, that is, $0 < \beta_k \leq \beta$ for each $k$. 
Then 
\begin{equation} \label{eq: VV order}
    \VV_\beta \subseteq \VV_{\beta_k} \quad \text{for all $k$},  
\end{equation} 
and consequently $\Gamma_x(\beta) 
    = \sup_{v \in \VV_\beta} O_v(x; P_X) 
    \leq \sup_{v \in \VV_{\beta_k}} O_v(x; P_X) 
    = \Gamma_x(\beta_k)$. Suppose, for the sake of contradiction, that $\Gamma_x(\beta_k)$ does not converge to $\Gamma_x(\beta)$. 
Then, after passing to a subsequence if necessary, 
there exists $\varepsilon > 0$ such that $\Gamma_x(\beta) + \varepsilon \leq \Gamma_x(\beta_k)$ for all $k$. By the definition of the supremum and by~\eqref{eq: VV order}, for each $k$ there exists $v_k \in \VV_{\beta_k}$ such that 
\begin{equation}    \label{eq: Gamma inequality}
    \Gamma_x(\beta) + \varepsilon/2 \leq O_{v_k}(x; P_X) \leq \Gamma_x(\beta_k).
\end{equation}
Consider the sequence $\{v_k\}_{k=1}^\infty \subset \SS$. 
Since the unit ball $\BB=\set{v\in\HH\st \norm{v}\leq 1} \supset \SS$ is weakly compact in $\HH$ \citep[Theorem~3.15]{Rudin1991}, there exists $v_0 \in \BB$ such that (again, passing to a subsequence if necessary) $v_k \xweakto[k\to\infty] v_0$.
As shown in the proof of Lemma~\ref{lemma: medMADunif}, 
both $v \mapsto m(v)$ and $v \mapsto s(v)$ are weakly continuous on $\BB$. 
In particular, $m(v_k) \xto[k\to\infty] m(v_0)$ and $s(v_k) \xto[k\to\infty] s(v_0)$. 
Note that we only know $v_0 \in \BB$. 
However, since $s(v_k)\geq \beta_k$ and $\beta_k \xto[k\to\infty] \beta$, 
it follows that $s(v_0)\geq \beta>0$, hence $v_0\neq 0$. 
Therefore, 
\begin{equation}\label{eq: limit}
    O_{v_k}(x; P_X) 
    = \frac{\abs{\inner{x, v_k} - m(v_k)}}{s(v_k)} 
    \xto[k\to\infty] 
    \frac{\abs{\inner{x, v_0} - m(v_0)}}{s(v_0)} 
    = O_{v_0}(x; P_X).
\end{equation}
Since $v_0 \in \BB \setminus \{0\}$, set $\tilde{v}_0 = v_0/\|v_0\|$.
Because $s(v_0)\geq \beta$, we have $s[\tilde{v}_0] = s(v_0)/\|v_0\| \geq \beta/\|v_0\| \geq \beta$, so that $\tilde{v}_0 \in \VV_\beta$. 
Moreover, since both numerator and denominator scale linearly, $O_{\tilde{v}_0}(x; P_X) = O_{v_0}(x; P_X)$. Combining this with the limit in~\eqref{eq: limit} and the inequality in~\eqref{eq: Gamma inequality}, we obtain
\[
   \Gamma_x(\beta) + \varepsilon/2 
   \leq O_{v_k}(x; P_X) 
   \xto[k\to\infty] O_{v_0}(x; P_X) 
   = O_{\tilde{v}_0}(x; P_X) 
   \leq \sup_{v \in \VV_\beta} O_{v}(x; P_X) 
   = \Gamma_x(\beta),
\]
a contradiction. 
Hence, $\Gamma_x$ is left continuous at $\beta$.

Now, let $\beta_k \xto[k\to\infty] \beta$ from the right.
Note that $\bigcup_{k\in\N}\VV_{\beta_k}=\VV_\beta^+$.
Therefore
\begin{equation*}
    \lim_{k\to\infty}\Gamma_x(\beta_k)=\lim_{k\to\infty}\sup_{v\in\VV_{\beta_k}}O_v(x; P_X)=\sup_{v\in\VV_\beta^+}O_v(x; P_X)=\Gamma_x(\beta)
\end{equation*}
by assumption~\eqref{pointwiseCons condition}. This shows that $\Gamma_x$ is also right continuous at $\beta$.
\end{proof}

\begin{proof}[\textbf{Proof of Theorem~\ref{th: elliptConsistency}}]\mbox{}
    First, note that the mapping $v\mapsto O_v(x; P_X)$ is Lipschitz continuous, uniformly over $\mathcal{F}$ on $\VV_{\beta/2}$. Indeed, observe that $v\mapsto\MAD[\inner{X,v}]=\norm{\Sigma^{1/2}v}$ is $\norm{\Sigma^{1/2}}$-Lipschitz, where $\norm{\Sigma^{1/2}}$ is the operator norm of $\Sigma^{1/2}$. Consequently, for any $v,w\in\VV_{\beta/2}$ we obtain 
    \begin{align*}
        &\abs{O_v(x; P_X)-O_w(x; P_X)}=\abs{\frac{\abs{\inner{x,v}-\inner{\mu,v}}}{\norm{\Sigma^{1/2}v}}-\frac{\abs{\inner{x,w}-\inner{\mu,w}}}{\norm{\Sigma^{1/2}w}}}\\
        &=\abs{\frac{\norm{\Sigma^{1/2}w}\abs{\inner{x,v}-\inner{\mu,v}}-\norm{\Sigma^{1/2}v}\abs{\inner{x,w}-\inner{\mu,w}}}{\norm{\Sigma^{1/2}v}\norm{\Sigma^{1/2}w}}}\\
        &\leq \frac{4}{\beta^2}\left(\abs{\norm{\Sigma^{1/2}w} - \norm{\Sigma^{1/2}v}}\abs{\inner{x,v}-\inner{\mu,v}}+\norm{\Sigma^{1/2}v}\bigl(\abs{\inner{x,v}-\inner{x,w}}+\abs{\inner{\mu,v}-\inner{\mu,w}}\bigr)\right)\\
        &\leq \frac{4}{\beta^2}\left(\norm{\Sigma^{1/2}}\norm{v-w}(M+\norm{\mu})+\norm{\Sigma^{1/2}}\norm{v-w}(M+\norm{\mu})\right)\\
        &=\norm{v-w}\,\frac{8}{\beta^2}\norm{\Sigma^{1/2}}(M+\norm{\mu})=C\norm{v-w},
    \end{align*}
    where $C>0$ is a constant independent of $v$ and $w$. 
    
    Next, consider the Hausdorff distance 
    \[
    d_H(\VV_t,\VV_\beta)=\inf\set{\varepsilon>0\st \VV_\beta\subseteq \VV_t^\varepsilon\wedge \VV_t\subseteq \VV_\beta^\varepsilon}, 
    \] 
    where $\VV_s^\varepsilon = \bigcup_{x\in \VV_s}\set{y\in\HH\st \norm{x-y}\leq \varepsilon}$. Note that for an arbitrary fixed $x\in\mathcal{F}$ and any $t$ with $\beta/2\leq t<\sqrt{\sigma_1}$, we have 
    \begin{equation}\label{eq: Hausdorff bound}
        \abs{\Gamma_x(t)-\Gamma_x(\beta)}=\abs{\sup_{v\in\VV_t}O_v(x; P_X)-\sup_{v\in\VV_\beta}O_v(x; P_X)}\leq C\, d_H(\VV_t,\VV_\beta).
    \end{equation}
    Indeed, for any $v\in\VV_\beta$ there exists $w\in\VV_t$ such that $\norm{v-w}\leq d_H(\VV_t,\VV_\beta)$. Therefore, 
    \begin{equation*}
        O_v(x; P_X)\leq O_w(x; P_X)+C\,d_H(\VV_t,\VV_\beta)\leq \Gamma_x(t)+C\,d_H(\VV_t,\VV_\beta),
    \end{equation*}
    and by taking the supremum over $v\in\VV_\beta$, we obtain
    $\Gamma_x(\beta)-\Gamma_x(t)\leq C\,d_H(\VV_t,\VV_\beta)$. Similarly, one obtains $\Gamma_x(t)-\Gamma_x(\beta)\leq C\,d_H(\VV_t,\VV_\beta)$, which proves~\eqref{eq: Hausdorff bound}. Since the upper bound in~\eqref{eq: Hausdorff bound} is independent of $x$, we also obtain
    \begin{align}
        \sup_{x\in\mathcal{F}}\abs{\Gamma_x(t)-\Gamma_x(\beta)}
        \leq C\, d_H(\VV_t,\VV_\beta). \label{eqGammaBound}
    \end{align}
    We claim that
    \begin{align} \label{eqHausDistBound}
        d_H(\VV_t,\VV_\beta)
        \leq 
        2 \max\set{1 - \sqrt{\frac{\sigma_1 - t^2}{\sigma_1 - \beta^2}}, 1 - \sqrt{\frac{\sigma_1 - \beta^2}{\sigma_1 - t^2}}}.
    \end{align}
    This implies $d_H(\VV_t,\VV_\beta)\to 0$ as $t\to\beta$,
    which along with the inequality in \eqref{eqGammaBound} yields the desired uniform consistency result.

    It now suffices to derive the upper bound in \eqref{eqHausDistBound}.
    Recall that we have $s(v)=\norm{\Sigma^{1/2}v}=\sqrt{\inner{v, \Sigma v}}$ for any $v\in\SS$.
    Take $\beta<t<\sqrt{\sigma_1}$. Note that $\VV_t\subseteq \VV_\beta$. Take any $v\in\VV_\beta\setminus\VV_t$, i.e., $\beta\leq s(v)<t$. Consider the eigenvector $\phi_1 \in \SS$ of $\Sigma$ associated with the maximal eigenvalue $\sigma_1$. This means that $\Sigma \phi_1 = \sigma_1 \phi_1$, and also $\inner{\phi_1, \Sigma \phi_1} = \sigma_1$. Since these identities are valid for both $\phi_1$ and $-\phi_1$, we may assume without loss of generality that $\inner{v, \phi_1} \geq 0$, and denote $\theta = \arccos\left(\inner{v, \phi_1}\right) \in (0, \pi/2]$. Consider the minor arc on $\SS$ between the directions $v$ and $\phi_1$ that is parameterized by
    \[
    \gamma(r) = \frac{\sin((1-r)\theta) v + \sin(r \theta) \phi_1}{\sin(\theta)} = \alpha_1(r, \theta) v + \alpha_2(r, \theta) \phi_1 \in \SS \quad \mbox{for } r \in [0,1].
    \]
    Take the function $g_\theta(r)$ defined by the right-hand side of
    \[
    \begin{aligned}
        s(\gamma(r))^2=\inner{\gamma(r), \Sigma \gamma(r)} & = \inner{\alpha_1(r, \theta) v + \alpha_2(r, \theta)\phi_1, \Sigma\left( \alpha_1(r, \theta) v + \alpha_2(r, \theta)\phi_1 \right)} \\
        & = \alpha_1^2(r, \theta) \inner{v, \Sigma v} + \alpha_2^2(r, \theta) \inner{\phi_1, \Sigma \phi_1} + 2\, \alpha_1(r, \theta) \alpha_2(r, \theta) \inner{v, \Sigma \phi_1} \\
        & \geq \alpha_1^2(r, \theta) \beta^2 + \alpha_2^2(r, \theta) \sigma_1 + 2 \alpha_1(r, \theta) \alpha_2(r, \theta) \sigma_1 \cos (\theta) \\
        & = \sigma_1 - (\sigma_1 - \beta^2)\left( \frac{\sin((1-r)\theta)}{\sin(\theta)} \right)^2 = g_\theta(r).
        \end{aligned}
        \]
    Clearly, $g_\theta(r)$ is strictly increasing and continuous in $r \in [0,1]$ for all $\theta \in (0, \pi/2]$, with $g_\theta(0) = \beta^2$ and $g_\theta(1) = \sigma_1$. As a function of $\theta$, $g_\theta(r)$ is non-increasing in $\theta \in (0, \pi/2]$ for any $r \in [0,1]$, and thus 
    \[  g_\theta(r) \geq g_{\pi/2}(r) = \sigma_1 - (\sigma_1 - \beta^2)\left(\sin((1-r)\pi/2) \right)^2 \quad \mbox{for all }\theta \in (0, \pi/2] \mbox{ and }r \in [0,1]. 
    \]
    Set 
    \begin{equation} \label{eq: tau0}  
        \tau_0 = \tau_0(t)= g_{\pi/2}^{-1}(t^2) = 1 - \frac{2}{\pi} \arcsin\left( \sqrt{\frac{\sigma_1 - t^2}{\sigma_1 - \beta^2}} \right),
    \end{equation}
    where $g_{\pi/2}^{-1}$ is the inverse function of $g_{\pi/2}$.
    Then we have $t^2 = g_{\pi/2}(\tau_0) \leq g_{\theta}(\tau_0)$ for all $\theta \in (0, \pi/2]$, meaning that also $s(\gamma(\tau_0))^2\geq g_\theta(\tau_0)\geq t^2$, hence $\gamma(\tau_0) \in \VV_t$. As for the distance of $v$ and $\gamma(\tau_0)$, we have
    \begin{align*}
        \norm{v - \gamma(\tau_0)}^2 & = \inner{v - \gamma(\tau_0), v - \gamma(\tau_0)} = 2 - 2\inner{v, \gamma(\tau_0)} = 2 - 2 \inner{v, \alpha_1(\tau_0, \theta)v + \alpha_2(\tau_0, \theta)\phi_1} \\
        & = 2 - 2\left( \alpha_1(\tau_0, \theta) + \alpha_2(\tau_0, \theta) \cos(\theta) \right) = 2 - 2 \cos(\tau_0\, \theta),
    \end{align*}
    which is an increasing function of $\theta \in (0, \pi/2]$. We obtain
    \[
        \begin{aligned}
        \norm{v - \gamma(\tau_0)}^2 & \leq 2 \left( 1 - \cos(\tau_0 \pi/2)\right) = 2 \left( 1 - \sqrt{\frac{\sigma_1 - t^2}{\sigma_1 - \beta^2}} \right),
        \end{aligned}
    \]
    where the last equality follows from~\eqref{eq: tau0}
        \[
        \cos\left( \tau_0 \pi/2 \right) = \cos\left( \frac{\pi}{2} - \arcsin\left(\sqrt{\frac{\sigma_1 - t^2}{\sigma_1 - \beta^2}} \right) \right) = \sin\left( \arcsin\left( \sqrt{\frac{\sigma_1 - t^2}{\sigma_1 - \beta^2}} \right) \right) = \sqrt{\frac{\sigma_1 - t^2}{\sigma_1 - \beta^2}}.
        \]
    We have established that for any $v \in \VV_\beta\supseteq \VV_t$ there exists $w = \gamma(\tau_0) \in \VV_t$ such  that 
    \begin{equation*}
        \norm{v-w}\leq 2 \left( 1 - \sqrt{\frac{\sigma_1 - t^2}{\sigma_1 - \beta^2}} \right).
    \end{equation*}
    Analogously, we could prove that if $0<t<\beta$, then for any $v\in\VV_t\supseteq \VV_\beta$ there exists $w\in\VV_\beta$ such that 
    \begin{equation*}
        \norm{v-w}\leq 2 \left( 1 - \sqrt{\frac{\sigma_1 - \beta^2}{\sigma_1 - t^2}} \right).    
    \end{equation*}
    This derives \eqref{eqHausDistBound} and concludes the proof.
\end{proof}

\section{Proofs for Section~3.3}

\begin{proof}[\textbf{Proof of Theorem~\ref{thm: BP}}]\mbox{}
    We will derive that 
    \begin{align} \label{eqBPproof1}
        \varepsilon^*(\theta;P_X) \geq 1/2.
    \end{align}
    The desired result $\varepsilon^*(\theta;P_X) = 1/2$ then follows 
    by using the argument of \citet[Theorem~2.1]{Lopuhaa1991}
    to prove that the \RPD{} median $\theta(P_X)$ is translation equivariant,
    and hence, that the breakdown point $\varepsilon^*(\theta;P_X)$ in \eqref{eq: BPdef} of $\theta(P_X)$ cannot exceed $1/2$.

    To establish the inequality in \eqref{eqBPproof1}, 
    let $\varepsilon < 1/2$ be given. 
    For any $X' \sim P_{(Q,\varepsilon)}$ with $Q \in \mathcal{Q}(\varepsilon)$, we can bound
    \begin{equation}\label{eq: BPmed}
        \med[\inner{X', v}] \leq \med[\norm{X'}] \leq q,
    \end{equation}
    where $q>0$ denotes the $(1/2 + \varepsilon)$-quantile of $\norm{X}$, a constant that does not depend on~$Q$. 
    The upper bound~\eqref{eq: BPmed} follows from the fact that the $\varepsilon$-contaminated distribution 
    $P_{(Q,\varepsilon)} = (1-\varepsilon)P_X + \varepsilon\, Q$ differs from $P_X$ by at most an $\varepsilon$-fraction of its mass. 
    Therefore, the median of $\norm{X'}$ under $P_{(Q,\varepsilon)}$ cannot exceed the $(1/2 + \varepsilon)$-quantile of $\norm{X}$ under $P_X$.
    Similarly,
    \begin{equation}\label{eq: BPMAD}
        \MAD[\inner{X', v}] \leq 2\,\med[\norm{X'}] \leq 2q.
    \end{equation}

    The inequality in \eqref{eqBPproof1} follows by showing show that
    \[
        \sup_{Q \in \mathcal{Q}(\varepsilon)} 
        \sup_{v \in \VV_\beta(Q,\varepsilon)}
        \abs{\inner{\theta(P_{(Q,\varepsilon)}) - \theta(P_X), v}} < \infty.
    \]
    Suppose, for contradiction, that
    \begin{equation*}
        \sup_{Q \in \mathcal{Q}(\varepsilon)} 
        \sup_{v \in \VV_\beta(Q,\varepsilon)}
        \abs{\inner{\theta(P_{(Q,\varepsilon)}) - \theta(P_X), v}} = \infty.
    \end{equation*}
    Since
    \[
        \sup_{Q \in \mathcal{Q}(\varepsilon)} 
        \sup_{v \in \VV_\beta(Q,\varepsilon)}
        \abs{\inner{\theta(P_{(Q,\varepsilon)}) - \theta(P_X), v}} 
        \leq 
        \sup_{Q \in \mathcal{Q}(\varepsilon)} 
        \sup_{v \in \VV_\beta(Q,\varepsilon)}
        \abs{\inner{\theta(P_{(Q,\varepsilon)}), v}} + \norm{\theta(P_X)},
    \]
    it follows that
    \begin{equation}\label{eq:BPsup}
        \sup_{Q \in \mathcal{Q}(\varepsilon)} 
        \sup_{v \in \VV_\beta(Q,\varepsilon)}
        \abs{\inner{\theta(P_{(Q,\varepsilon)}), v}} = \infty.
    \end{equation}
    Because $\beta > 0$, for any $Q \in \mathcal{Q}(\varepsilon)$ we can bound using~\eqref{eq: BPmed}
    \begin{equation*}
        \sup_{v \in \VV_\beta(Q,\varepsilon)} O_v(0; P_{(Q,\varepsilon)}) 
        = \sup_{v \in \VV_\beta(Q,\varepsilon)} 
          \frac{\abs{\med[\inner{X', v}]}}{\MAD[\inner{X', v}]} 
        \leq q / \beta,
    \end{equation*}
    where $X' \sim P_{(Q,\varepsilon)}$. 
    Since $\theta(P_{(Q,\varepsilon)})$ maximizes $D_{\beta}(\cdot; P_{(Q,\varepsilon)})$ over 
    $\Clo{\spn{\VV_\beta(Q,\varepsilon)}}$, its outlyingness cannot exceed that of 
    $0 \in \Clo{\spn{\VV_\beta(Q,\varepsilon)}}$. Hence,
    \begin{align*}
        q / \beta
        &\geq \sup_{v \in \VV_\beta(Q,\varepsilon)} O_v(0; P_{(Q,\varepsilon)}) 
        \geq \sup_{v \in \VV_\beta(Q,\varepsilon)} O_v(\theta(P_{(Q,\varepsilon)}); P_{(Q,\varepsilon)}) \\
        &= \sup_{v \in \VV_\beta(Q,\varepsilon)} 
           \frac{\abs{\inner{\theta(P_{(Q,\varepsilon)}), v} - \med[\inner{X', v}]}}{\MAD[\inner{X', v}]} \\
        &\geq \sup_{v \in \VV_\beta(Q,\varepsilon)} 
           \frac{\abs{\inner{\theta(P_{(Q,\varepsilon)}), v}} - \abs{\med[\inner{X', v}]}}{2q} 
        \geq \frac{\sup_{v \in \VV_\beta(Q,\varepsilon)} \abs{\inner{\theta(P_{(Q,\varepsilon)}), v}} - q}{2q},
    \end{align*}
    where we used bounds~\eqref{eq: BPmed} and~\eqref{eq: BPMAD}.
    Taking the supremum over all $Q \in \mathcal{Q}(\varepsilon)$ and using~\eqref{eq:BPsup}, we obtain
    \[
        q / \beta \geq 
        \frac{\sup_{Q\in \mathcal{Q}(\varepsilon)} \sup_{v \in \VV_\beta(Q,\varepsilon)} \abs{\inner{\theta(P_{(Q,\varepsilon)}), v}} - q}{2q} 
        = \infty,
    \]
    which is a contradiction. 
    This shows that $ \varepsilon^*(\theta;P_X)\geq 1/2$ in \eqref{eqBPproof1},
    and completes the proof.
\end{proof}

%
%
%

\bibliographystyle{elsarticle-harv} 
\bibliography{allpapers.bib}
\end{document}